\let\neg\widebar
\def\@forall@skip{1ex minus .3ex}
\def\forall{\ifmmode{\hskip\@forall@skip\text{for all}\hskip\@forall@skip}%
            \else{ for all }\fi}
\newtheorem{theorem}{Theorem}
\newtheorem{lemma}{Lemma}
\newtheorem{corollary}{Corollary}
    \DeclareMathOperator*{\InverseErf}{erf^{-1}}
    \let\OriginalErf\erf % Arxiv does not like NewDocumentCopy
    \RenewDocumentCommand{\erf}{}{ \erf:N }
\DeclareDocumentCommand\derivative{ s o m g d() }
{ % Total derivative
    % s: star for \flatfrac flat derivative
    % o: optional n for nth derivative
    % m: mandatory (x in df/dx)
    % g: optional (f in df/dx)
    % d: long-form d/dx(...)
    \IfBooleanTF{#1}
    {\let\fractype\flatfrac}
    {\let\fractype\frac}
    \IfNoValueTF{#4}
    {
        \IfNoValueTF{#5}
        {\fractype{\diffd \IfNoValueTF{#2}{}{^{#2}}}{\diffd #3\IfNoValueTF{#2}{}{^{#2}}}}
        {\fractype{\diffd \IfNoValueTF{#2}{}{^{#2}}}{\diffd #3\IfNoValueTF{#2}{}{^{#2}}} \argopen(#5\argclose)}
    }
    {\fractype{\diffd \IfNoValueTF{#2}{}{^{#2}} #3}{\diffd #4\IfNoValueTF{#2}{}{^{#2}}}\IfValueT{#5}{(#5)}}
}
\def\et al.{\textit{et al.}}
\NewDocumentCommand{\R}{}{\mathbb{R}}
\NewDocumentCommand{\Reals}{}{\mathbb{R}}
\NewDocumentCommand{\C}{}{\mathbb{C}}
\NewDocumentCommand{\dg}{}{\dagger}
\NewDocumentCommand{\inv}{}{^{-1}}
\NewDocumentCommand{\zo}{s}{\IfBooleanTF{#1}{[0,1]}{\{0,1\}}}
\NewDocumentCommand{\HammingWeight}{m}{\abs{#1}}
\NewDocumentCommand{\SuchThat}{}{\mathrel{\colon}}
\NewDocumentCommand{\Where}{}{\mathrel{,}}
\NewDocumentCommand{\Algorithm}{m}{\ensuremath{\mathcal{#1}}}
\NewDocumentCommand{\bigO}{s}{\IfBooleanTF{#1}{{\mathcal{O}}\qty}{{\mathcal{O}}}}
\NewDocumentCommand{\tbigO}{s}{\IfBooleanTF{#1}{\tilde{\mathcal{O}}\qty}{\tilde{\mathcal{O}}}}
\NewDocumentCommand{\Floor}{sm}{%
    \IfBooleanTF{#1}{\left\lfloor}{\lfloor}%
    {#2}%
    \IfBooleanTF{#1}{\right\rfloor}{\rfloor}}
\NewDocumentCommand{\Sslash}{}{\mskip 0mu plus 2mu/\mkern-6mu/\mkern1mu\mskip 0mu plus 2mu}
\NewDocumentCommand{\Please}{m}{\overset{!}{#1}}
\DeclareMathOperator{\Identity}{I}
\DeclareMathOperator{\Threshold}{Threshold}
\begin{document}

\title{Making the cut: two methods for breaking down a quantum algorithm}

\author{Miguel \surname{Mur\c{c}a}}
\thanks{These authors contributed equally to this work}
\email[corresponding author address: ]{miguel.murca@tecnico.ulisboa.pt}
\affiliation{Instituto Superior T\'{e}cnico, Universidade de Lisboa, Portugal}
\affiliation{Centro de Física e Engenharia de Materiais Avançados (CeFEMA), Physics of Information and Quantum Technologies Group, Portugal}
\affiliation{PQI -- Portuguese Quantum Institute, Portugal}

\author{Duarte \surname{Magano}}
\thanks{These authors contributed equally to this work}
\email[corresponding author address: ]{miguel.murca@tecnico.ulisboa.pt}
\affiliation{Instituto Superior T\'{e}cnico, Universidade de Lisboa, Portugal}
\affiliation{Instituto de Telecomunica\c{c}\~{o}es, Lisboa, Portugal}

\author{Yasser Omar}
\affiliation{Instituto Superior T\'{e}cnico, Universidade de Lisboa, Portugal}
\affiliation{Centro de Física e Engenharia de Materiais Avançados (CeFEMA), Physics of Information and Quantum Technologies Group, Portugal}
\affiliation{PQI -- Portuguese Quantum Institute, Portugal}

\date{\today}

\begin{abstract}
    Despite the promise that fault-tolerant quantum computers can efficiently
    solve classically intractable problems, it remains a major challenge to
    find quantum algorithms that may reach computational advantage in the
    present era of noisy, small-scale quantum hardware. Thus, there is
    substantial ongoing effort to create new quantum algorithms (or adapt
    existing ones) to accommodate depth and space restrictions. By adopting a
    hybrid query perspective, we identify and characterize two methods of
    ``breaking down'' quantum algorithms into rounds of lower (query) depth,
    designating these approaches as ``parallelization'' and ``interpolation''.
    To the best of our knowledge, these had not been explicitly identified and
    compared side-by-side, although one can find instances of them in the
    literature. We apply them to two problems with known quantum speedup:
    calculating the $k$-threshold function and computing a NAND tree. We show
    that for the first problem parallelization offers the best performance,
    while for the second interpolation is the better choice. This illustrates
    that no approach is strictly better than the other, and so that there is
    more than one good way to break down a quantum algorithm into a hybrid
    quantum-classical algorithm.
\end{abstract}

\maketitle

\section{Introduction}

Algorithms that combine classical processing with limited quantum computational
resources hold an attractive promise: to provide computational advantage over
completely classical computation, while remaining compatible with the
technological landscape of quantum computing. The appeal of this kind of
algorithm is well reflected in some of the key modern proposals for quantum
advantage, usually based on variational principles \cite{Bharti2022}. Prominent
examples include the Quantum Approximate Optimization Algorithm
\cite{Farhi2014}, the Variational Quantum Eigensolver
\cite{VQE,McClean2016,Wecker,Kandala2017}, and some versions of Quantum Machine
Learning
\cite{farhi2018classification,Benedetti2019,chen2020variational,Banchi2022,Buffoni2020}.
All of these attempt to exploit circuits of limited coherence to obtain
computational advantage. However, variational approaches often cannot offer
theoretical performance guarantees, as discussed in
refs.~\cite{McClean2018,Anschuetz2022}.

Consider instead a setting where we are given a quantum algorithm with
guaranteed advantage for a certain computational problem, but the available
hardware is too noisy to execute the algorithm with a reasonable fidelity. We
would need to limit the circuit depths to values much shorter than the ones
prescribed by the original algorithm to prevent errors from dominating the
calculations. Is it still possible to guarantee some quantum advantage? We may
phrase this question more precisely. Say we are faced with a computational
problem $f$ that can be solved by a classical computer in time $C(f)$, and we
know a quantum algorithm that solves $f$ with complexity $Q(f)$ ($Q(f) < C(f)$)
by running quantum circuits of depth $D$; what is the best we can do if we are
only permitted to run quantum circuits up to a depth $D'$ smaller than $D$? The
expectation is that best strategy yields an algorithm with a complexity between
$C(f)$ and $Q(f)$. We believe that understanding this question may contribute
to finding practical but provable advantage in near-term quantum computers.

For oracular problems, the notion of limited coherence is captured by the
hybrid query (or decision tree) complexity $Q(f; D)$, introduced by Sun and
Zheng~\cite{Sun2019}. 
In this setting, only the input accesses (or queries)
contribute to the complexity count, while the intermediate computations are
free. 
$Q(f; D)$ is defined as the minimum number of queries required to solve $f$
when limited to running quantum circuits of depth $D$. That is, we can only
perform $D$ queries before being forced to measure the state of the circuit and
restart it.

It is known that quantum decision trees are strictly more powerful than hybrid
decision trees, which are strictly more powerful than classical decision trees.
Concretely, there is a problem $f$ for which $C(f)$ and $Q(f, \bigO(1))$ are
(super-)exponentially separated \cite{AaronsonAmbainis}, and similarly there is
a problem $f$ for which $Q(f, \bigO(1))$ and $Q(f)$ are exponentially separated
\cite{Sun2019}. There are also problems $f$ that exhibit a continuous trade-off
between speedup and circuit depth \cite{Wang2019},
i.e., $Q(f;D) < Q(f;D+1)$ for every $D$ between $1$ and $Q(f)$.

\begin{figure*}[t]
    \includegraphics[width=\linewidth]{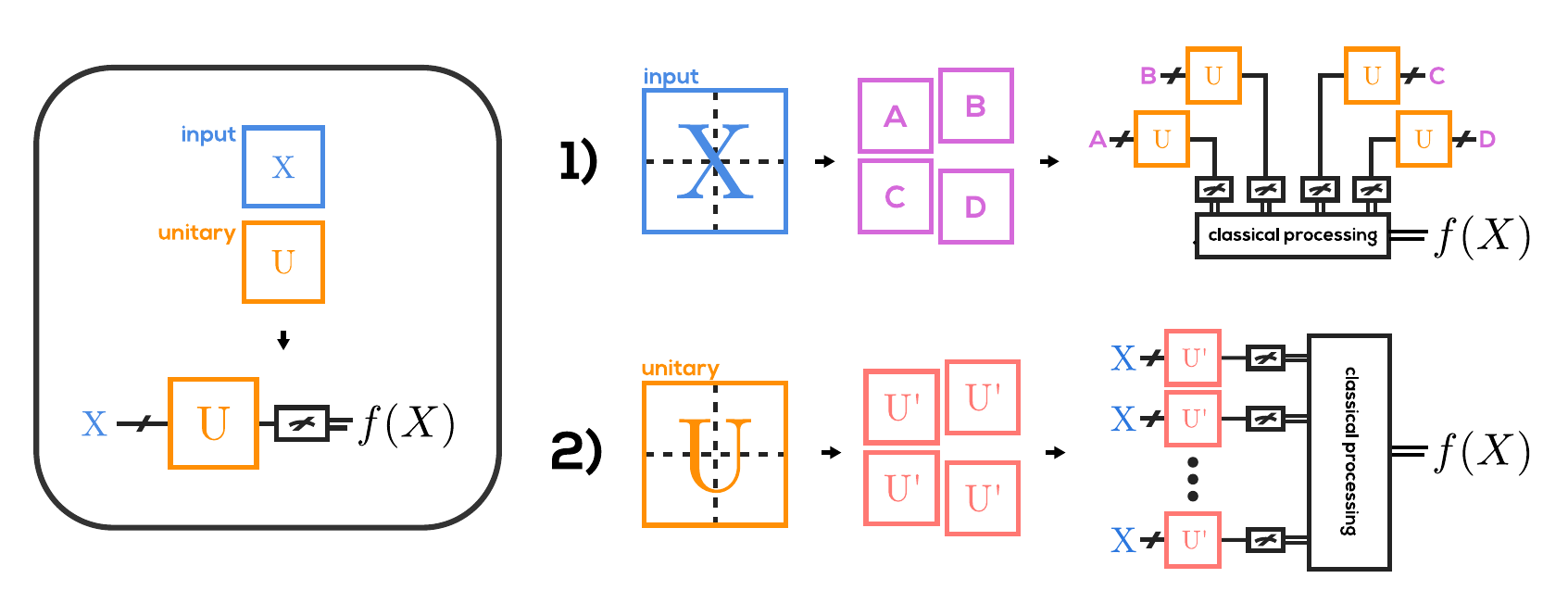}
    \caption{Diagrammatic representation of the procedures of parallelization
    (procedure $1$) and interpolation (procedure $2$), as defined and
    exemplified in this paper. For both procedures, the overall goal is to
    carry out a quantum algorithm (as described by some unitary $U$) for some
    input $X$ to calculate a property $f$ of $X$, as shown in the box to the
    left. However, this unitary may require a prohibitive depth (modelled by us
    as a prohibitive amount of coherent quantum queries). In the case of
    parallelization (procedure $1$), this is dealt with by identifying
    independent, smaller instances of the same problem that can be dealt with
    within the query constraints; in other words, by partitioning the input
    appropriately into sub-problems. Interpolation (procedure $2$) involves,
    instead, considering multiple repetitions of some unitary (or sequence of
    unitaries) that require, individually, less coherent queries, but that
    collectively yield the same information as a single run of $U$. For both of
    these approaches, ``breaking up'' the original algorithm may come at a cost
    of more overall queries -- indeed, we expect that this will be the case for
    most algorithms. We show that no method is strictly better than the other,
    and that the best choice depends on the problem at hand.}
    \label{fig:representation}
\end{figure*}

Despite these landmark results, it is not always obvious how to optimally
``break up'' an algorithm into circuits of smaller sizes. For example, when
given a quantum circuit that is too deep, P\'{e}rez-Salinas \et
al.~\cite{PeresSalinas2022} propose a heuristic algorithm where one performs
intermediate measurements in a parametrized basis, as given by a shallow
variational circuit, optimized to minimize the effect of measuring and
restarting the quantum operation. But, the expressiveness of the shallow circuit
determining the measurement basis and the difficulty of minimizing the cost
function may limit the success of this approach.

In this paper, we identify and discuss two general strategies with theoretical
guarantees to limit the depth of an algorithm to some specified value $D$. We
refer to them as ``parallelization'' and ``interpolation''. Parallelization
applies when a problem can be broken down into a number of smaller, independent
sub-problems, such that the algorithm that solves these sub-problems fits the
permitted circuit depth. In contrast, with interpolation the entire problem is
tackled at each circuit run. It applies whenever there is a trade-off between
the information content of the measurement and the depth of the corresponding
circuit. In these cases, we may compensate the information loss caused by
shortening the circuit depth with repeated runs of the shorter circuit.
Intuitively,  we can say that interpolation methods ``break up the unitary''
that solves the problem instead of breaking up the problem itself. See Figure
\ref{fig:representation} for an illustration of these notions. To the best of
our knowledge, neither of these methods had been explicitly identified and
compared side-by-side, even though several works that fit into these labels can
be found in the literature. For example, parallelization approaches are present
in refs.~\cite{Zalka1999,GroverRadhakrishnan2004,Jeffery2017}, while
refs.~\cite{Wang2019,WangKohJohnsonCao2021,GiurgicaTiron2022,Magano2022}
describe interpolation methods.

We argue that Quantum Singular Value Transformations (QSVTs) \cite{Gilyen2019}
provide a natural framework for thinking about interpolation. In a seminal
work, Gily\'{e}n \et al.~\cite{Gilyen2019} have shown that it is possible to
perform polynomial transformations on the singular values of large matrices
with circuits whose degree is proportional to the degree of the corresponding
polynomial, and that many important quantum algorithms can be described this
way. Taking this perspective, limiting the circuit depth means implementing a
rougher approximation to the target function. As a consequence, each
measurement provides less accurate information about the quantity that we are
trying to estimate. Sometimes, this effect can be compensated with statistical
sampling. That is, we can trade-off circuit depth and number of circuit runs.

We illustrate these methods with two well-known problems: the \(k\)-threshold
function and perfectly balanced NAND trees. These problems are known to exhibit
quantum speed-ups (\cite{Beals2001} and
\cite{Farhi2014,ChildsCleveJordanYongeMallo2009,Ambainis2010}), but, to the
best of our knowledge, neither has been discussed in the context of a
limited-depth computing model. Both problems are amenable to both
parallelization and interpolation. We show that for the \(k\)-threshold
function parallelization offers the best performance, while for evaluating
perfectly balanced NAND trees the interpolation method is the most efficient.
This reinforces the relevance of the distinction between parallelization and
interpolation, and demonstrates that no technique  is \textit{a priori} better
than the other, as the best option depends on the problem at hand.

\section{Preliminaries}

\subsection{Hybrid Query Model \label{sec:querymodel}}

We will be working mostly within the query model of quantum computing. Here we
quickly review the main concepts, referring to Ambainis \cite{AndrisReview2017}
for a more in-depth discussion.

The quantum query complexity model, a generalization of decision tree
complexity \cite{BdW02}, is widely used to study the power of quantum
computers. On one hand, it captures the important features of most quantum
algorithms, including search \cite{Grover1996}, period-finding \cite{Shor1994},
and element distinctness \cite{Ambainis2007}. On the other hand, it is simple
enough to make the proof of lower bounds attainable
\cite{Beals2001,Ambainis2002}.

In the query model, the goal is to compute a Boolean function $f(x_1, \ldots,
x_N)$ of variables \(x_i \in \zo\). The function can be total (defined on
$\zo^N$) or partial (defined on a subset of $\zo^N$). We only get information
about the input variables by querying a black-box quantum operator $O$ acting
as
\begin{equation}
    O \ket{i}\ket{b} = \ket{i} \ket{b \oplus x_i}
\end{equation}
for every $b\in \zo$ and $i \in \zo^N$. A quantum query algorithm is specified
by a set of input-independent unitaries $U_0, U_1,\ldots, U_T$. The algorithm
consists in performing the transformation
\begin{equation}
    \label{eq:querycircuit}
    U_T \, O \, U_{T-1} \, \ldots U_2 \, O \, U_1 \, O \, U_0 \ket{0}
\end{equation}
and measuring the result, which is then converted into the answer of the problem according to a predefined rule.
In the query model, the algorithm's complexity increases with each query, while the intermediate computations are free.
That is, the complexity of the algorithm corresponding to transformation \eqref{eq:querycircuit} is $T$, independently of how the unitaries $U_i$ are chosen.

We say that a quantum algorithm computes $f$ with bounded error if, for all $x
\in \zo^N$, the answer of the algorithm agrees with $f(x)$ with probability at
least $2/3$, where the probability is over the randomness of the algorithm's
measuring process. The minimum query complexity of any bounded-error algorithm
computing $f$ is the quantum (bounded-error) complexity of $f$, denoted as
$Q(f)$.

The hybrid query model introduced by Sun and Zheng \cite{Sun2019} captures the
idea of restricted-depth computation in an oracular setting. Hybrid algorithms
are in direct correspondence with hybrid decision trees. A hybrid decision tree
is similar to a (classical) decision tree, but the decision at each node is
determined by the output of a quantum  algorithm with query complexity no more
than a value $D$, which we refer to as the depth of the hybrid algorithm. The
hybrid algorithm's answer is the output of the algorithm at the leaf node. More
plainly, a hybrid algorithm works by running and measuring sequences of
circuits like \eqref{eq:querycircuit} with $T \leq D$, using the intermediate
measurements to decide what quantum circuit to run next.

A hybrid algorithm computes $f$ with bounded error if, for all $x \in \zo^N$,
the answer of the algorithm agrees with $f(x)$ with probability at least $2/3$,
where the probability is over the randomness of the internal measurements. The
complexity of a path in a hybrid tree is the sum of the complexities of the
algorithms associated to each node in the path. The complexity of a hybrid
algorithm that computes a function $f$ is the maximal complexity of any path
that connects the root and a leaf, that is, it is the total number of queries
needed to evaluate $f$ in the worst case. The minimum query complexity of any
bounded-error hybrid algorithm computing $f$ is the hybrid (bounded-error)
complexity of $f$, denoted as $Q(f; D)$.

\subsection{Quantum Singular Value Transformations \label{sec:QSVT}}

In this paper we make extensive use of Quantum Singular Value Transformations
(QSVTs) \cite{Gilyen2019,Martyn2021,Lin2022}. As a generalization of the work
on Quantum Signal Processing \cite{LowChuang19}, QSVTs have provided a unifying
description of several algorithms, including  amplitude estimation, quantum
simulation, and quantum methods for linear systems. Recently, Magano and
Mur\c{c}a \cite{Magano2022} have shown that QSVTs also constitute a natural
framework for reasoning about interpolation methods.

By the singular value decomposition theorem, an arbitrary matrix $A$ of rank
$r$ can be written as
\begin{equation}
    A = \sum_{i=1}^r \sigma_i \ket{w_k} \bra{v_k},
\end{equation}
where $\{w_k\}_k$ and $\{v_k\}_k$ are orthogonal sets (known as the left and
right singular values of $A$, respectively) and $\{\sigma_k\}_k$ are positive
real numbers (known as the singular values of $A$). For functions $P: \R
\rightarrow \C$, we call
\begin{equation}
    P^{\text{(SV)}}(A) := \sum_{i=1}^r P(\sigma_i) \ket{w_k} \bra{v_k}
\end{equation}
a singular value transformation of $A$.

When considering performing such transformations on arbitrary matrices with
quantum computers, we are immediately faced with the difficulty that quantum
states evolve according to unitary transformations. The introduction of
\emph{block-encodings} overcomes this apparent limitation
\cite{Chakraborty2018}. Let \(\Pi\) and \(\tilde\Pi\) be orthogonal projectors
and \(U\) be a unitary; we say that \(\Pi\), \(\tilde\Pi\), and $U$ form a
block-encoding of the operator $A$ if
\begin{equation}
    A = \tilde\Pi U \Pi.
\end{equation}

Based on this concept, the main theorem of QSVTs can be phrase as follows.

\begin{theorem}[QSVTs \cite{Gilyen2019}] 
    \label{thm:qsvt}
    Let \(\Pi\), \(\tilde\Pi\), and \(U\) be a block-encoding of a matrix $A$,
    and let $P : [-1,1] \rightarrow [-1,1]$ be a polynomial of degree $d$.
    Then, we can implement a unitary $U_P$ such that \(\Pi\), \(\tilde\Pi\),
    and \(U_P\) form a block-encoding of $P^{\text{(SV)}}(A)$ using $\bigO(d)$
    calls to $U$, $U^{\dg}$ and $\Pi/\tilde\Pi$-controlled-NOT
    operations.\footnote{By $\Pi$-controlled-NOT we mean an operation that acts
    as a NOT gate controlled on a given state being in the image of
    $\Pi/\tilde\Pi$.} 
\end{theorem}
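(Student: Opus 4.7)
The plan is to reduce the problem to Quantum Signal Processing (QSP) by decomposing the ambient Hilbert space into a direct sum of two-dimensional subspaces on which the block-encoding unitary $U$ and the projector-controlled operations act as simple $2\times 2$ transformations, so that the QSVT circuit restricts on each block to a standard QSP sequence. This follows the strategy of Gily\'{e}n \et al.

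First, I would use the singular value decomposition $A = \sum_k \sigma_k \ket{w_k}\bra{v_k}$ to construct the invariant subspaces. For each $\sigma_k \in (0,1)$, the vector $\ket{v_k}$ together with the normalised component of $U^{\dg}\ket{w_k}$ orthogonal to $\ket{v_k}$ spans a two-dimensional subspace $\mathcal{R}_k$ that sits inside $\mathrm{Im}(\Pi) \oplus \ker(\Pi)$; symmetrically, $\ket{w_k}$ and the orthogonal component of $U\ket{v_k}$ span a two-dimensional subspace $\mathcal{L}_k$. One then verifies that $U$ maps $\mathcal{R}_k$ onto $\mathcal{L}_k$ as a rotation by an angle $\theta_k$ with $\cos\theta_k = \sigma_k$, and $U^{\dg}$ maps $\mathcal{L}_k$ back by the inverse rotation. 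The degenerate cases $\sigma_k \in \{0,1\}$, and vectors orthogonal to all of the $\ket{v_k}, \ket{w_k}$, give rise to one-dimensional invariant subspaces which can be absorbed into the same picture as constant (trivial) blocks.

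Next, I would note that each $\Pi$-controlled-NOT (together with one ancilla qubit prepared in $\ket{+}$) implements a projector-controlled phase $e^{i\phi(2\Pi - I)}$, and likewise for $\tilde\Pi$. Restricted to any $\mathcal{R}_k$ (resp.\ $\mathcal{L}_k$), these phase operators act as $Z$-rotations in a basis aligned with $\ket{v_k}$ (resp.\ $\ket{w_k}$). Hence the alternating product
\begin{equation}
    U_P \;=\; e^{i\phi_0(2\tilde\Pi - I)}\, U\, e^{i\phi_1(2\Pi - I)}\, U^{\dg}\, e^{i\phi_2(2\tilde\Pi - I)}\, U \,\cdots
\end{equation}
reduces on each block to an alternating sequence of $X$-rotations by $\theta_k$ and $Z$-rotations by $\phi_j$, which is precisely the QSP setup. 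Invoking the QSP theorem, for any real polynomial $P$ of degree $d$ with $\abs{P(x)} \leq 1$ on $[-1,1]$ and fixed parity $d \bmod 2$, one can choose $d+1$ phases (computable classically) such that the top-left entry of the $2\times 2$ block equals $P(\cos\theta_k) = P(\sigma_k)$. This is exactly the statement that $\Pi, \tilde\Pi, U_P$ form a block-encoding of $P^{\text{(SV)}}(A)$.

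Finally, I would lift the parity restriction by decomposing a generic degree-$d$ polynomial as $P = P_e + P_o$ with $P_e$ even and $P_o$ odd, implementing block-encodings of each via two QSVT sequences, and combining them with a linear-combination-of-unitaries trick costing one additional ancilla and a constant factor overhead. The resulting circuit uses $\bigO(d)$ calls to $U, U^{\dg}$ and $\bigO(d)$ projector-controlled-NOT operations, matching the claim. The main obstacle I anticipate is the first step: carefully verifying that the family $\{\mathcal{R}_k, \mathcal{L}_k\}$ is mutually orthogonal, that $U$ and $U^{\dg}$ respect this decomposition, and that all degenerate singular values together with the orthogonal complements of $\mathrm{Im}(\Pi)$ and $\mathrm{Im}(\tilde\Pi)$ are accounted for. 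Once this linear-algebraic reduction is in place, the rest follows mechanically from the QSP theorem.
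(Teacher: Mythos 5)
The paper never proves this theorem: it is imported directly from Gily\'{e}n et al.\ \cite{Gilyen2019}, and your sketch reproduces exactly that reference's argument (cosine--sine decomposition of the block-encoding into $2\times 2$ invariant blocks, reduction to quantum signal processing, parity splitting plus linear combination of unitaries), so it matches the source the paper relies on. The outline is sound; the only details to tighten are that the projector-controlled phase $e^{i\phi(2\Pi - I)}$ is built from the $\Pi$-controlled-NOT with an ancilla prepared in $\ket{0}$ (with $\ket{+}$ the controlled-NOT acts trivially), and that a real polynomial with $\abs{P}\leq 1$ and fixed parity is in general realized only as the \emph{real part} of the attainable QSP top-left entry, so one needs the standard trick of combining the phase sequence with its reversed/conjugated counterpart rather than asserting the entry equals $P(\sigma_k)$ outright.
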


A transformation that will be particularly useful for us is the step (or
Heaviside) function,
\begin{equation}
    \sigma \mapsto 
    \begin{cases}
        1, \text{ if } \sigma \geq \mu \\
        0, \text{ if } \sigma < \mu
    \end{cases},
\end{equation}
for some $\mu \in [-1,1]$. Refs.~\cite{LowPhD,Gilyen2019} show that we can
approximate this transformation up to arbitrary accuracy by a polynomial
approximation of the error function, defined as
\begin{equation}
    \erf(x) := \frac{2}{\sqrt{\pi}} \int_0^x e^{-t^2} \dd{t}.
\end{equation}

The result is stated below.

\begin{theorem}[Polynomial approximation of step function \cite{Gilyen2019}]
\label{thm:stepfunction}
    There is a polynomial $P_{\delta, \eta, \mu}(\lambda):[-1,1] \rightarrow
    [-1,1]$ of degree
    \begin{equation}
        \bigO \left( \frac{1}{\delta} \log\left(\frac{1}{\eta} \right) \right)
    \end{equation}
    satisfying
    \begin{align}
        \vert P_{\delta, \eta, \mu}(\sigma) \vert  \leq \eta, \forall \sigma &\in [-1,\mu - \delta] && \\
        P_{\delta, \eta, \mu}(\sigma) \geq 1 - \eta, \forall \sigma          &\in [\mu+\delta,1] . &&
    \end{align}
\end{theorem}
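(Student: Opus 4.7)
The plan is to build $P_{\delta,\eta,\mu}$ as an affine rescaling of a polynomial approximation of the error function itself, since $\frac{1}{2}(1+\erf(k\,\cdot))$ is already a smooth approximation of the step at the origin whose sharpness grows with $k$. I would proceed in three stages: (i) choose a scale parameter $k$ so that $\erf(k(\lambda-\mu))$ already approximates $\operatorname{sign}(\lambda-\mu)$ to accuracy $\eta/3$ outside $[\mu-\delta,\mu+\delta]$; (ii) replace $\erf$ by a polynomial $\tilde E$ of degree $d$ that approximates $\erf(k\,\cdot)$ uniformly on $[-1,1]$ to accuracy $\eta/3$; (iii) define
\begin{equation}
    P_{\delta,\eta,\mu}(\lambda) := \tfrac{1}{2}\bigl(1+\tilde E(\lambda-\mu)\bigr)
\end{equation}
and verify that the two inequalities hold, with a tiny symmetric shrinking $(1-\eta/3)$ applied if needed to guarantee the codomain $[-1,1]$.

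For step (i), the standard tail bound $1-\erf(y)\le \tfrac{1}{\sqrt\pi}\,e^{-y^2}/y$ for $y>0$ shows that it suffices to take $k = \Theta\!\bigl(\delta^{-1}\sqrt{\log(1/\eta)}\bigr)$ so that $k\delta \gtrsim \sqrt{\log(1/\eta)}$, making $\erf(k(\lambda-\mu))$ within $\eta/3$ of $\pm 1$ on $[\mu+\delta,1]$ and on $[-1,\mu-\delta]$ respectively. For step (iii), the verification is a triangle inequality between the piecewise target, the smooth $\erf$ approximation, and its polynomial truncation, each contribution being at most $\eta/3$.

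The real content lies in step (ii): producing a polynomial approximation to $\erf(kx)$ on $[-1,1]$ of degree $O\!\bigl(\delta^{-1}\log(1/\eta)\bigr)$. I would use the integral representation
\begin{equation}
    \erf(kx) = \frac{2k}{\sqrt\pi}\int_0^1 e^{-(kxt)^2}\,x\,\dd t \quad\text{(after substitution)}
\end{equation}
and expand the Gaussian $e^{-(ks)^2}$ via its Jacobi--Anger / Chebyshev series, which is the same technical lemma used for Hamiltonian simulation: a Gaussian of width $1/k$ is $\eta$-approximated by a truncation of its Chebyshev series of degree $O(k + \log(1/\eta))$, because the Chebyshev coefficients of $e^{-(ks)^2}$ decay super-geometrically once $n\gg k$. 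Integrating term-by-term yields a polynomial in $x$ of the same asymptotic degree, and substituting the chosen $k=\Theta(\delta^{-1}\sqrt{\log(1/\eta)})$ gives the claimed total degree $\bigO\!\bigl(\tfrac{1}{\delta}\log(1/\eta)\bigr)$.

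The main obstacle I expect is exactly this coefficient decay estimate: naive Taylor truncation of $\erf(kx)$ around $0$ gives the wrong scaling (roughly $k^2$ rather than $k$) because one is approximating on an interval much wider than the transition region. The refinement to Chebyshev nodes, together with the super-geometric tail bound on the Chebyshev coefficients of a Gaussian, is what converts the $k^2$ into a $k$, and hence turns $\delta^{-2}$ into $\delta^{-1}$. Everything else is bookkeeping with the tail bound on $\erf$ and a triangle inequality.
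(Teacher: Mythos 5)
Your construction is essentially the same route the paper relies on: the theorem is stated by citation, and the underlying argument (approximate the shifted step by $\erf(k(\lambda-\mu))$ with $k=\Theta(\delta^{-1}\sqrt{\log(1/\eta)})$, then polynomially approximate $\erf(k\,\cdot)$ via the Chebyshev/Jacobi--Anger expansion of the Gaussian integrand) is exactly the Low--Chuang/Gily\'{e}n et al.\ proof that the paper restates in Appendix~\ref{sec:symmfunctions} as Lemma~\ref{lemma:qsvt_erf} and Corollary~\ref{corollary:qsvt_threshold}. One minor imprecision: the Chebyshev coefficients of $e^{-(ks)^2}$ decay like $e^{-\Theta(n^2/k^2)}$ for $n \lesssim k^2$ (truly super-geometric decay only sets in around $n \gtrsim k^2$), so the truncation degree is $\bigO\bigl(k\sqrt{\log(1/\eta)}+\log(1/\eta)\bigr)$ rather than $\bigO\bigl(k+\log(1/\eta)\bigr)$, but with your choice of $k$ this still yields the claimed total degree $\bigO\bigl(\tfrac{1}{\delta}\log(1/\eta)\bigr)$.
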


We will also be interested in performing a step transformation on the modulus
of the singular values (also known as a window function due to the shape of its
plot),
\begin{equation}
    \sigma \rightarrow 
    \begin{cases}
        1, \text{ if } \vert \sigma\vert  \leq \mu \\
        0, \text{ if } \vert \sigma\vert  > \mu
    \end{cases}.
\end{equation}
Noting that $P_{\delta, \eta, -\mu} - P_{\delta, \eta,\mu}$ is a polynomial
with the same degree as $P_{\delta, \eta, \mu}$, we immediately derive the
following.
\begin{corollary}[Polynomial approximation of window function]
\label{thm:bumpfunction}
    There is a polynomial $P'_{\delta, \eta, \mu}(\lambda):[-1,1] \rightarrow
    [-1,1]$ of degree
    \begin{equation}
        \bigO \left( \frac{1}{\delta} \log\left(\frac{1}{\eta} \right) \right)
    \end{equation}
    satisfying
    \begin{align}
        & \vert P'_{\delta, \eta, \mu}(\sigma) \vert  \leq \eta,
        \forall \sigma \in [-1 , -\mu - \delta] \cup [\mu + \delta, 1]\\
        & P'_{\delta, \eta, \mu}(\sigma)  \geq 1 - \eta, \forall \sigma \in [-\mu + \delta, \mu-\delta] .
    \end{align}
\end{corollary}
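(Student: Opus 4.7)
The plan is to construct $P'_{\delta,\eta,\mu}$ directly as the difference of two step-function polynomials from Theorem~\ref{thm:stepfunction}, exactly as hinted at in the remark immediately preceding the corollary. Specifically, I set
\[
    P'_{\delta,\eta,\mu}(\sigma) \;:=\; P_{\delta,\eta/2,-\mu}(\sigma) \;-\; P_{\delta,\eta/2,\mu}(\sigma),
\]
where the factor $\eta/2$ is chosen so that the two additive errors combine cleanly to the $\eta$ appearing in the statement.

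The first step is to check the degree. By Theorem~\ref{thm:stepfunction}, each of $P_{\delta,\eta/2,\pm\mu}$ has degree $\bigO((1/\delta)\log(2/\eta)) = \bigO((1/\delta)\log(1/\eta))$ (since $\log(2/\eta) = \log 2 + \log(1/\eta)$). Because the degree of a difference is at most the larger of the two summand degrees, $P'_{\delta,\eta,\mu}$ inherits the same asymptotic bound.

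The second step is a three-case verification matching the three regions in Corollary~\ref{thm:bumpfunction}, assuming WLOG $\mu \geq 0$. On $[-1,-\mu-\delta]$, Theorem~\ref{thm:stepfunction} gives $\abs{P_{\delta,\eta/2,-\mu}} \leq \eta/2$, and also $\abs{P_{\delta,\eta/2,\mu}} \leq \eta/2$ (since $-\mu-\delta \leq \mu-\delta$), so the triangle inequality yields $\abs{P'} \leq \eta$. On $[\mu+\delta,1]$, both step polynomials lie in $[1-\eta/2, 1]$, so their difference lies in $[-\eta/2, \eta/2]$. On the central region $[-\mu+\delta, \mu-\delta]$, one has $P_{\delta,\eta/2,-\mu} \geq 1-\eta/2$ while $\abs{P_{\delta,\eta/2,\mu}} \leq \eta/2$, giving $P' \geq 1-\eta$, as required.

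The one subtle point, which I expect to be the main (minor) obstacle, is that the subtraction naively only guarantees $P' \in [-2,2]$, whereas the corollary demands codomain $[-1,1]$. Inside the two transition windows of width $2\delta$ centered at $\pm\mu$, only one of the two factors is controlled by Theorem~\ref{thm:stepfunction}, so one cannot a priori rule out an overshoot beyond $1$ in absolute value. However, wherever one factor is pinned near $0$ or $1$ we still get $\abs{P'} \leq 1 + \eta/2$, and so rescaling by a constant factor slightly below $1$ (equivalently, tightening $\eta$ by a constant) restores the codomain $[-1,1]$ without affecting the asymptotic degree or the stated error tolerances.
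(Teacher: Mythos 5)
Your construction is exactly the paper's: the text derives Corollary~\ref{thm:bumpfunction} by the one-line observation that $P_{\delta,\eta,-\mu}-P_{\delta,\eta,\mu}$ has the same degree as $P_{\delta,\eta,\mu}$, and your three-region check with the tightened parameter $\eta/2$ is just a more careful version of that (the paper's bookkeeping would in fact only give error $2\eta$ in the outer region, which is harmless asymptotically but your fix is cleaner). The degree bound and the verification on $[-1,-\mu-\delta]$, $[\mu+\delta,1]$ and $[-\mu+\delta,\mu-\delta]$ are all correct.

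The one soft spot is precisely the codomain issue you flag, and your bound there is not quite right as stated: in the transition window around $+\mu$ the controlled factor $P_{\delta,\eta/2,-\mu}$ is pinned near $1$ while Theorem~\ref{thm:stepfunction}, taken as a black box, only tells you $P_{\delta,\eta/2,\mu}\in[-1,1]$; hence the difference is a priori only bounded by $2$ there, not by $1+\eta/2$ (your $1+\eta/2$ bound is valid in the window around $-\mu$, where the pinned factor is near $0$). A rescaling by a factor close to $1/2$ would destroy the $\geq 1-\eta$ guarantee, so the patch needs a genuine input beyond the theorem statement: the step polynomial is constructed as a uniform approximation to the shifted error function $\tfrac{1}{2}\left(1+\erf(k(x-\mu))\right)$, which takes values in $[0,1]$, so each $P_{\delta,\eta/2,\pm\mu}$ can be taken $\geq -\eta/2$ on all of $[-1,1]$ (see Lemma~\ref{lemma:qsvt_erf} and Corollary~\ref{corollary:qsvt_threshold} in the appendix). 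With that monotone/nonnegative structure the overshoot is indeed at most $1+O(\eta)$ everywhere, and your constant rescaling then restores the codomain $[-1,1]$ without affecting the degree or the error tolerances, completing the argument. The paper itself silently skips this point, so your proof, once this justification is added, is if anything more complete than the original.
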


Combining Theorems \ref{thm:qsvt} and \ref{thm:stepfunction} we find a method
to distinguish the singular values of a block-encoded matrix that are above or
below a given threshold. Similarly, from Theorem \ref{thm:qsvt} and Corollary
\ref{thm:bumpfunction} we can distinguish the singular values of a
block-encoded matrix whose modulus are above or below a given threshold.

\section{Two approaches to restricted-depth computation}

\subsection{Parallelization}

In many cases the problem at hand can be broken down into a number of smaller,
independent sub-problems. As an example, consider the problem of computing the
OR function on $N$ bits. We can partition the domain into $p$ subdomains of
size approximately $N/p$. If for any of those subdomains there is an index $i$
for which $x_i = 1$, then we return $1$; otherwise the answer is $0$. In other
words, the problem is reduced to evaluating $p$ OR function on $N/p$ bits. With
Grover's algorithm \cite{Grover1996} we can evaluate each subdomain with
$\bigO(\sqrt{N/p})$ queries.
In total, this strategy has a query complexity of 
\begin{equation}
    \bigO*(\sqrt{p N}).
\end{equation}
If we are limited to circuits of depth $D$, we set $p = \bigO(N / D^2)$,
finding that
\begin{equation}
    Q(\text{OR}; D) = \bigO*(\frac{N}{D} + \sqrt{N}).
\end{equation}
By Corollary~1.5 of Sun and Zheng \cite{Sun2019}, this is optimal.

We say that the algorithms that employ this kind of strategy -- breaking the
problem into smaller problems that fit the permitted depth -- fall into the
category of parallelization methods. Note that this procedure does not
require multiple quantum processors operating at the same time, even
though it is amenable to it. The important point is that the different
sub-problems considered are independent and may be treated as such. This should
be contrasted with the notion of parallel quantum algorithms as defined by
Jeffery \et al.\ \cite{Jeffery2017}, where a number of queries are realized at
the same time (in parallel), but by a number of quantum registers that may, for
example, be entangled with each other.

Arguably, parallelization as described above is the most natural approach to
``breaking up'' a quantum algorithm into circuits of lower quantum depth.
Examples of parallelization include Zalka \cite{Zalka1999}, containing the OR
function discussed above, Grover and Radhakrishnan
\cite{GroverRadhakrishnan2004}, searching for marked elements over many copies
of a database, and Jeffery \et al. \cite{Jeffery2017}, with the problems of
element distinctness and $k$-sum. Although these references were originally
motivated by the idea of quantum processors acting in parallel, they easily
translate to the discussion of restricted-depth setting, and fit into the
description of the procedure of parallelization we have made above.

\subsection{Interpolation}

Contrary to parallelization, interpolation methods do not distribute the
problem into different sub-problems. Instead, at each run the entire problem is
tackled -- only over several quantum circuit runs. Since the circuit depth is
limited, each circuit measurement can only yield partial information about the
answer to problem; the definitive answer is recovered by repeating the
computation multiple times.

We illustrate this approach with an information-theoretic argument (similar to
that of Wang \et al.~\cite{WangKohJohnsonCao2021}). Say that we have a quantum
routine $\mathcal{A}$ that prepares the state
\begin{equation} \label{eq:AAstate}
    \ket{0^n} 
    \xrightarrow{\mathcal{A}} 
    \sqrt{1-p} \ket{\psi_0} + \sqrt{p} \ket{\psi_1}
\end{equation}
for some unknown $p \in [0,1]$, and assume that we can efficiently distinguish
between $\ket{\psi_0}$ and $\ket{\psi_1}$. The goal is to estimate $p$, noting
that many query problems can be reduced to estimating an amplitude. With
Grover's iterator \cite{Grover1996}, we can prepare the state
\begin{equation} \label{eq:iteratedAAstate}
    \cos\big( (1 + 2k) \theta\big) \ket{\psi_0}
    + 
    \sin\big((1 + 2k) \theta\big) \ket{\psi_1},
\end{equation}
where $\theta = \arcsin(\sqrt{p})$, with $\bigO(k)$ calls to $\mathcal{A}$.
Now suppose that we prepare and measure the state \eqref{eq:iteratedAAstate} in the $\{\ket{\psi_0}, \ket{\psi_1}\}$ basis $l$ times, recording the outcomes.
The Fisher information associated with this experiment is
\begin{align} 
    I(\pi) := & \,  l \sum_{i=0,1} \frac{1}{\mathbb{P}[\ket{\psi_i} \vert \pi] } \left( \frac{\partial}{\partial \pi} \mathbb{P}[\ket{\psi_i} \vert \pi] \right)^2
    \nonumber
    \\ = & \,  \frac{l (1+ 2k)^2}{\pi(1-\pi)},
    \label{eq:Fisherinfo}
\end{align}
where $\mathbb{P}[\ket{\psi_i} \vert \pi]$ is the probability of observing
outcome $\ket{\psi_i}$ in a single trial assuming that $p = \pi$. Expression
\eqref{eq:Fisherinfo} reveals that the measurement is more informative the
larger the value of $k$ (in particular, that it grows quadratically with $k$,
justifying the quadratic speedup of Grover's algorithm).

Refs.~\cite{Wang2019,GiurgicaTiron2022,Magano2022} have suggested different
schemes to harness the enhanced information of deeper circuits. Here we adopt
the perspective put forward by Magano and Mur\c{c}a \cite{Magano2022},
according to which QSVTs constitute a natural framework for interpolation
methods. The idea is to trade off the quality of the polynomial approximation
to the target function by statistical sampling. That is, we can compensate
using polynomials of lower degree (corresponding to lower circuit depths) by
running the quantum circuits more times. The result is a continuous trade-off
between circuit depth and quantum speed-up, without ever needing to identify
independent sub-problems.

In the subsequent sections we demonstrate how QSVTs can be used to interpolate
specific problems.

\section{Sometimes Parallelization is Better: Threshold Function}

Consider the \(k\)-threshold function, a total symmetric boolean function
defined as follows:

\begin{equation}
    \Threshold_k (x_1, \ldots, x_N) = \begin{cases}
        0 & \text{if } \sum_{i=1}^N x_i \leq k \\
        1 & \text{otherwise}
    \end{cases}.
\end{equation}

This function admits a quantum query speed-up: whereas in the classical case
\(\Theta(N)\) queries are required (easily concluded by an adversarial
argument), the quantum query complexity is \(\Theta(\sqrt{N \min(k, N-k)})\)
(as follows from Beals \et al. \cite{Beals2001}), resulting in the
aforementioned quadratic speed-up when \(\min(k, N-k)=\bigO(1)\), and no
speed-up when \(\min(k, N-k) = \Omega(N)\). For simplicity, we assume from now
on that $k \leq N/2$.

We approach the problem from the perspective of QSVTs. This is a departure from
the original proof of Beals \et al.~\cite{Beals2001}, where the problem of
evaluating any totally symmetric Boolean function is reduced to quantum
counting. Arguably, QSVTs permit tackling the $k$-threshold problem more
directly, while also offering a more natural route towards interpolation. We
show in Appendix \ref{sec:symmfunctions} that our approach can also be
generalized to any totally symmetric Boolean function, although in that case
the proof resembles more closely that of Beals \et al.~\cite{Beals2001}.

We start by making the (trivial) observation that $k$-threshold function can be
written as a function of the Hamming weight of the input, which we denote by
$\HammingWeight{x}$. The first step of our algorithm will be to block-encode
$\sqrt{\HammingWeight{x} / N}$ (or, more technically, to block-encode the
$1\times1$ matrix whose only entry is $\sqrt{\HammingWeight{x} / N}$). Then, we
will perform a QSVT on this value to prepare the desired function of
$\HammingWeight{x}$.

Consider the unitary transformation
\begin{equation}
    U = \quad\raisebox{1.2em}{%
        \Qcircuit @C=1.1em @R=.5em {
            & \qw{{}^{n}/} & \gate{H^{\otimes n}}     & \multigate{1}{O_X} & \qw \\
            & \qw          & \qw                    & \ghost{O_X}        & \qw \\
        }
    },
\end{equation}
where \(n = \log_2(N)\) -- assuming, without loss of generality, that \(N\) is
exactly a power of two -- and \(O\) is our query operator (defined in Section
\ref{sec:querymodel}). We have that
\begin{flalign}
    U \ket*{0^{n+1}} & = \sqrt{\frac{1}{N}} \left(\quad\sum_{\mathclap{i \SuchThat x_i = 0}} \ket{i}\ket{0} + \sum_{\mathclap{i \SuchThat x_i = 1}} \ket{i} \ket{1}\;\right) \nonumber \\
    & = \sqrt{1 - \frac{\HammingWeight{X}}{N}} \ket{\phi_0} \ket{0} + \sqrt{\frac{\HammingWeight{X}}{N}} \ket{\phi_1} \ket{1} ,
    \label{eq:threshold_blockencoding}
\end{flalign}
where \(\ket{\phi_0}\) and \(\ket{\phi_1}\) are normalized states.
Choosing
\begin{equation}
    \Pi = \dyad*{0^{n+1}} \, \text{ and } \, \tilde\Pi = \Identity_{2^n} \otimes \dyad{1}
\end{equation}
we find that
\begin{equation}
    \tilde\Pi U \Pi = \sqrt{\frac{\HammingWeight{x}}{N}}.
\end{equation}

That is, $\tilde\Pi$, $\Pi$, and $U$ form a block-encoding of
$\sqrt{\HammingWeight{x} / N}$.

We would like to distinguish between cases where $\sqrt{\HammingWeight{x} / N}$
is smaller or equal to $\sqrt{k / N}$ and those where it is larger than
$\sqrt{k / N}$. From the results on QSVTs (Theorems \ref{thm:qsvt} and
\ref{thm:stepfunction}) we can perform the transformation
\begin{equation}
    \label{eq:threshold_transformation}
    \ket*{0^{n+1}}
    \rightarrow
    P_{\delta, \eta, \mu}\left(\sqrt{\frac{\HammingWeight{x}}{N}} \right) 
    \ket{\phi_1} \ket{1}
    + \ket{\perp_1},
\end{equation}
where $\ket{\perp_1}$ is such that $\tilde\Pi \ket{\perp_1} = 0$, using $\bigO
( (1/\delta) \log (1 / \eta ) )$ calls to $U$. As $U$ only calls the query
operator $O$ once, the operation \eqref{eq:threshold_transformation} only
involves $\bigO ( (1/\delta) \log (1 / \eta ) )$ queries. We choose the
parameters as
\begin{align}
    \eta &= 1/8, \label{eq:eta_def}\\
    \delta &= \frac{1}{2} \left(\sqrt{(k+1)/N} - \sqrt{k/N} \right)
    = \bigO*(\frac{1}{\sqrt{kN}}), \label{eq:par_delta_def} \\
    \mu &= \frac{1}{2} \left(\sqrt{(k+1)/N} + \sqrt{k/N} \right),
\end{align}
in which case the operation \eqref{eq:threshold_transformation} consumes
$\bigO(\sqrt{kN})$ queries. The final step is simply to measure the last qubit
of the resulting state, outputting $0$ if we measure $\ket{0}$ and outputting
$1$ if we measure $\ket{1}$. To verify that this yields the desired answer,
consider the two possible scenarios:
\begin{itemize}
    \item $\Threshold_k (x_1, \ldots, x_N) = 0$. Then, $\sqrt{\HammingWeight{x}
        / N} \leq \sqrt{\HammingWeight{k} / N}$, which means that $P_{\delta,
        \eta, \mu}(\sqrt{\HammingWeight{x} / N} ) \leq \eta = 1/8$. So, the
        probability of measuring the last qubit in state $\ket{1}$ is less than
        $1/3$.
    \item $\Threshold_k (x_1, \ldots, x_N) = 1$. Then, $\sqrt{\HammingWeight{x}
        / N} > \sqrt{\HammingWeight{k} / N}$, which means that $P_{\delta,
        \eta, \mu}(\sqrt{\HammingWeight{x} / N} ) \geq  1 - \eta = 7/8$. So,
        the probability of measuring the last qubit in state $\ket{1}$ is
        greater than $2/3$.
\end{itemize}

If instead $k > N/2$, the algorithm does not change significantly: denote the
logical negation of $x$ by $\neg x$, and note that $\Threshold_k (x_1, \ldots,
x_N) =  1 - \Threshold_k (\neg x_1, \ldots, \neg x_N)$. It follows that we just
need to evaluate the threshold function on $\neg x$, whose Hamming weight is
$\HammingWeight{\neg x} = N - \HammingWeight{x}$. Looking at expression
\eqref{eq:threshold_blockencoding}, we see that $U$ already provides a
block-encoding of the $\sqrt{\HammingWeight{\neg x}/N}$: we just need to
replace $\tilde\Pi$ by $\Identity_{2^n} \otimes \dyad{0}$. Everything else
follows as before.

\paragraph{Interpolation.}

The algorithm that we have just presented can be interpolated using the same
strategy as in Magano and Mur\c{c}a \cite{Magano2022}. Recall from the theory
of QSVTs (Section \ref{sec:QSVT}) that with deeper circuits we can prepare
polynomial transformations of higher degree. Conversely, by limiting the circuit
depths we are forced to implement a rougher approximation to the target
function (in this case, the step function). The idea is to compensate this
effect by performing a larger number of measurements.

Concretely, the trade-off between circuit depth and repetitions of the circuit
can be controlled by the parameter $\eta$, which we had previously fixed to be
$\bigO(1)$ (\textit{cf.} \eqref{eq:eta_def}). Now we choose
\begin{equation}
    \eta = \bigO\left(2^{-\delta D}\right)
\end{equation}
in such a way that the circuit depth associated with the transformation by
$P_{\delta, \eta, \mu}$ is upper bounded by $D$. If we measure the last qubit
of state \eqref{eq:threshold_transformation}, the probability that we see
$\ket{1}$ is
\begin{align}
    & \leq  \eta^2,  \text{ if } \Threshold_k (x) = 0, \text{ or} \\
    & \geq  (1-\eta)^2,  \text{ if } \Threshold_k (x) = 1.
\end{align}

So, the problem is reduced to distinguishing the bias of a Bernoulli
distribution with precision $1 - 2 \eta$. It is well-known that $\Theta(1 /
(1-\eta)^2)$ samples are sufficient (and necessary) to achieve such a precision
with bounded-error probability. That is, we prepare and measure state
\eqref{eq:threshold_transformation}
\begin{equation}
    \bigO*(\frac{1}{(1-\eta)^2}) = \bigO*(\frac{1}{\delta D})
\end{equation} 
times. The total number of queries to $O$ is
\begin{equation}
    \bigO*( \frac{1}{(1-\eta)^2} \times \frac{1}{\delta} \log\left(\frac{1}{\eta} \right) )
    =
    \bigO*(\frac{1}{\delta^2 D}).
\end{equation}

Replacing in the definition of $\delta$ \eqref{eq:par_delta_def}, we conclude that
\begin{equation}
    \label{eq:thresholdcomplexity_interpolation}
    Q(\Threshold_k; D) = \bigO*(\frac{k N}{D} + \sqrt{k N}).
\end{equation}

\paragraph{Parallelization.}

The approach of \cite{Magano2022} was originally developed in the context phase
estimation. In phase estimation the parameter $\phi$ to be estimated is
accessed via a black-box oracle that changes the phase of a particular state by
an angle proportional to $\phi$. In that case, the interpolation is likely
optimal. However, the threshold problem has more structure than phase
estimation. Indeed, we can choose to query only a subset of the input
variables, in which case the block-encoding holds information about the Hamming
weight of that subset of input variables, whereas we cannot choose to query a
``fractional phase''.

It is the parallelization approach that yields the optimal algorithm for
evaluating the threshold function in a restricted-depth setting. To show this,
we follow a procedure similar to that of Grover and Radhakrishnan
\cite{GroverRadhakrishnan2004}. First, we partition the set \(\{1, 2, \ldots,
N\}\) into \(p\) disjoint subsets $V_1, \ldots, V_p$ of size $N/p$ (to simplify
the notation, we assume that $N/p$ is an integer). Then, for each subset
\(V_i\), we prepare the uniform superposition \(\sqrt{p/N} \sum_{j\in V_i}
\ket{j}\ket{0}\) and apply to it the query operator \(O\). The resulting state
is
\begin{equation}
    \sqrt{\frac{p \HammingWeight{x \Sslash V_i}}{N}} \ket{\phi'_1} \ket{1} + \sqrt{1 - \frac{p \HammingWeight{x \Sslash V_i}}{N}} \ket{\phi'_0} \ket{0}
\end{equation}
where \(\ket{\phi'_0}\), \(\ket{\phi'_1}\) are normalized states and \(\vert x
\Sslash V_i\vert  := \vert\{x_j \in x \SuchThat j \in V_i\} \vert\). If we run
the amplitude estimation algorithm of Brassard \et al.~\cite{Brassard2002} for
$D$ steps we get an estimate of the amplitude $\sqrt{p\HammingWeight{x \Sslash
V_i} / N}$ up to precision
\begin{equation}
    \bigO*( \frac{1}{D}  \sqrt{\frac{p \HammingWeight{x \Sslash V_i}}{N}})
    \label{eq:AAprecision}
\end{equation}
with a constant probability. To lower the probability that the algorithm fails
to $1/p$, we repeat the amplitude amplification routine $\bigO(\log p)$ times;
this guarantees a bounded probability that all the amplitude estimations
succeed in returning a precision as in \eqref{eq:AAprecision}. We set
\begin{equation}
    D = 
    \begin{cases}
        \bigO*(\sqrt{\frac{N \log p}{p}}) & \text{if } k \leq p \log p \\
        \bigO*(\frac{\sqrt{N k}}{p})      & \text{if } k \geq p \log p.
    \end{cases}
    \label{eq:depth_def}
\end{equation}

Then, for every subset $V_i$, we are estimating $\HammingWeight{x \Sslash V_i}$
with precision
\begin{equation}
    \epsilon_i = 
    \begin{cases}
        \bigO*(\sqrt{\frac{\HammingWeight{x \Sslash V_i}}{\log p}}) & \text{if } k \leq p \log p \\
        \bigO*(\sqrt{\frac{\HammingWeight{x \Sslash V_i}}{k / p}})  & \text{if } k \geq p \log p.
    \end{cases}
    \label{eq:xprecision}
\end{equation}

We estimate $\HammingWeight{x}$ as the sum of our estimates for
$\HammingWeight{x \Sslash V_i}$. If it exceeds $k$, we output $1$, and
otherwise we output $0$.

The actual behaviour of the algorithm depends on how the $1$-input
variables are distributed among the subsets $V_1, \ldots, V_p$. In the
worst-case scenario, all the ones are concentrated in a single bin. However,
this scenario is extremely unlikely. Raab and Steger's ``balls into bins''
theorem \cite{BallsIntoBins} states that, with probability greater than $2/3$,
\begin{equation} \label{eq:ballsintobins}
    \max_i \HammingWeight{x \Sslash V_i} = 
    \begin{cases}
        \bigO*(\log p)                      & \text{if } \HammingWeight{x} \leq p \log p \\
        \bigO*(\frac{\HammingWeight{x}}{p}) & \text{if } \HammingWeight{x} \geq p \log p .
    \end{cases}
\end{equation}

Using this result, we show in Appendix \ref{sec:threshold_proof} that there is
a choice for the constant factors in \eqref{eq:depth_def} that guarantees that
our estimate for $\HammingWeight{x}$ is larger than $k$ if $\Threshold_k(x) =
1$ and smaller or equal than $k$ if $\Threshold_k(x) = 0$.

Putting everything together, we conclude that
\begin{equation}
    Q(\Threshold_k; D) = \bigO*(\frac{N}{D} \log^2\left(\frac{N}{D}\right) + \sqrt{Nk} \log k).
\end{equation}

Comparing with the upper bound that we derived with the interpolation method
(equation \eqref{eq:thresholdcomplexity_interpolation}), we see that
parallelization offers the best performance. Indeed, for short circuit depths
the complexity of the parallelization method is smaller by a factor of $k$ (up
to logarithmic factors).

\section{Sometimes Interpolation is Better: NAND trees}

We now apply the interpolation and parallelization techniques for the problem
of evaluating a balanced binary NAND formula. This problem has been widely
studied in the literature: Farhi \et al. \cite{FarhiGoldstoneGutmann2007}
proposed a quantum walk algorithm that runs in $\bigO*(N^{1/2})$ time with an
unconventional, continuous-time query model. Later, Childs \et al.
\cite{ChildsCleveJordanYongeMallo2009} understood that this algorithm could be
translated into the discrete query model (as presented in Section
\ref{sec:querymodel}) with just an $\bigO*(N^{o(1)})$ overhead. Finally,
Ambainis \et al. \cite{Ambainis2010} presented an optimal
$\bigO*(N^{1/2})$-time algorithm on the conventional query model. We adapt
their approach to a restricted-depth setting.

Let $\Phi$ be a Boolean function on $N$ inputs $x_1, \ldots, x_N$ expressed
with NAND gates. We treat  each occurrence of a variable separately, in that
$N$ is counting with the variables' multiplicity. Equivalently, we could be
considering a formula expressed in terms of the gate set $\{\text{AND},
\text{OR}, \text{NOT}\}$. The input is accessed via the conventional query
operator $O$ as defined in Section  \ref{sec:querymodel}.

The formula $\Phi$ can be represented by a tree, where the internal nodes are
NAND gates acting on their children and the leafs hold the input variables.
Here we restrict our attention to formulas that are represented by perfectly
balanced binary trees. We note that Ambainis \et al.'s algorithm can be applied
to general formulas after a proper rebalancing of the corresponding tree
\cite{Bshouty1995,Bonnet1994}. Similarly, our arguments could also be extended
to the general case.

Ambainis \et al. \cite{Ambainis2010} prove that (after efficient classical
pre-processing) $\Phi(x)$ can be evaluated with bounded-error probability using
$\sqrt{N}$ queries to $O$. The main idea is to build a weighted graph, whose
adjacency matrix, denoted as $H$, has spectrum that relates to the value of
$\Phi(x)$. Then, one simulates a discrete-time quantum walk on this graph. By
applying a phase estimation on this process for a special starting state, one
is able to infer the value of $\Phi(x)$.

Starting on the graph construction of Ambainis \et al. \cite{Ambainis2010}, we
present a different, QSVT-based approach to infer the value of $\Phi(x)$,
circumventing the quantum walk and phase estimation steps. With the
aforementioned principle of trading off lower degree polynomial approximations
by longer statistical sampling, we immediately derive an interpolating
algorithm for evaluating general NAND trees.

We present a succinct definition of $H$, referring the reader to the original
paper \cite{Ambainis2010} for a more detailed explanation. We construct a
symmetric weighted graph from the formula's tree, attaching to the root node
(call it $r$) a tail of two nodes, $r'$ and $r''$. For each node $v$, let $s_v$
be the number of variables of the subformula rooted at $v$. The weights on the
graph are defined in the following manner. If $p$ is the parent of a node $v$,
then
\begin{equation}
    \bra{v} H \ket{p} := \left( \frac{ s_v}{s_p} \right)^{1/4},
\end{equation}
with two exceptions:
\begin{enumerate}
    \item if $v$ is a leaf reading $1$, then $\bra{v} H \ket{p} := 0$
        (effectively removing the edge $(v,p)$ from the graph);
    \item $\bra{r'} H \ket{r''} := 1 / (\sqrt{2} N^{1/4})$.
\end{enumerate}

The spectrum of $H$ has the following properties \cite[Theorem~2]{Ambainis2010}:
\begin{enumerate}
    \item if $\Phi(x)=0$, then there is a zero-eigenvalue eigenstate $\ket{g}$
        of $H$ $\vert \bra{r''}\ket{g} \vert \geq 1 / \sqrt{2}$; \label{item:Phi=0}
    \item if $\Phi(x)=1$, then every eigenstate with support on $\ket{r''}$ has
        eigenvalue at least $1 / (18 \sqrt{2N})$ in absolute value.
        \label{item:Phi=1}
\end{enumerate}

That is, we can evaluate $\Phi$ by determining whether $\ket{r''}$ has a large
zero-eigenvalue component. We propose doing this within the QSVT framework.

\paragraph{Interpolation.}

The first step in our interpolation approach to evaluating NAND trees is to
construct a block-encoding of $H$. As $H$ has bounded degree and the weights of
its edges are upper bounded by $1$, we can use standard block-encoding
techniques for sparse matrices \cite{Chakraborty2018,Lin2022}. Namely, for
projectors
\begin{equation}
    \Pi, \tilde\Pi = \dyad{0^m},
\end{equation}
with $m = \bigO(\log N)$, there is a unitary $U_H$ that block-encodes $H / 3$
with $\bigO(1)$ calls to $O$. By definition, the unitary $U_H$ is such that,
for an arbitrary state $\ket{\psi}$
\begin{equation}
    U_H \ket{0^m} \ket{\psi} = \ket{0^m} \left( \frac{H}{3} \ket{\psi} \right) + \ket{\perp},
\end{equation}
where $\ket{\perp}$ is orthogonal to $\ket{0^m}$.

We would like to distinguish between the eigenstates of $H/3$ whose eigenvalue
is close to zero and those whose eigenvalue is larger than
\begin{equation}
\label{eq:delta_def}
    \frac{1}{3} \times \frac{1}{18 \sqrt{2N} } =: \delta
\end{equation}  
in absolute value. We treat this as a QSVT problem, as discussed in Section
\ref{sec:QSVT}. Indeed, let $\{\lambda_i, \ket{v_i}\}_i$ be an eigenvalue
decomposition of $H/3$ and $\ket{\psi} = \sum_i \alpha_i \ket{v_i}$ be an
arbitrary state. From Theorem \ref{thm:qsvt} and Corollary
\ref{thm:bumpfunction}, we can perform the transformation 
\begin{align}
    \ket{0^m} \ket{\psi} 
    = & \ket{0^m} \left( \sum_i \alpha_i \ket{v_i}\right) \nonumber \\
    \rightarrow & \ket{0^m} \left( \sum_i P'_{\delta, \eta, \mu}(\lambda_i)\alpha_i \ket{v_i}\right) + \ket{\perp},
    \label{eq:bumpQSVTtransformation}
\end{align}
where $P'_{\delta, \eta, \mu}$ is an approximation to the window function (as
defined in Corollary \ref{thm:bumpfunction}), with $\bigO ( (1/\delta) \log (1
/ \eta ) )$ queries to $O$.

We now have all the necessary tools to solve the problem. We start by preparing
the state $\ket{r''}$ (this does not involve any oracle queries). We then
transform $\ket{r''}$ as in \eqref{eq:bumpQSVTtransformation}. We measure the
$m$ first qubits (i.e., the block-encoding register) of the resulting state,
assigning an outcome ``yes'' if we observe $\ket{0^m}$ and an outcome ``no''
otherwise. From the spectral properties of $H$ we known that
\begin{equation}
    \mathbb{P}[\text{``yes''}] 
    \begin{cases}
    \geq  \frac{(1-\eta)^2}{2},  \text{ if } \Phi(x) = 0 \\
    \leq  \eta^2,  \text{ if } \Phi(x) = 1
    \end{cases}.
\end{equation}
So, we need to determine the bias of a Bernoulli distribution with precision no
larger than $(1- \eta)/4$. It is well-known that $\bigO(1 / (1-\eta)^2)$
samples are sufficient (and necessary) to achieve such a precision with
bounded-error probability. In summary, we can evaluate $\Phi(x)$ with
bounded-error probability by running $\bigO ( (1/\delta)
\log (1 / \eta ) )$-deep circuits
$\bigO(1 / (1-\eta)^2)$ times, amounting to a total of
\begin{equation}
    \bigO \left( \frac{1}{(1-\eta)^2} \times \frac{1}{\delta} \log\left(\frac{1}{\eta} \right) \right)
    \label{eq:NANDcomplexity}
\end{equation}
queries to $O$.

We have purposely left $\eta$ as a free parameter in our algorithm. We get the
best possible complexity by choosing $\eta = 1 -\Omega(1)$, in which case the
algorithm's query complexity is (using definition \eqref{eq:delta_def})
\begin{equation}
    \bigO\left(\frac{1}{\delta}\right) = \bigO\left( \sqrt{N}\right),
\end{equation}
recovering the scaling of Ambainis \et al. \cite{Ambainis2010}. But this choice
of $\eta$ requires running circuits of depth also in $\bigO(\sqrt{N})$. Suppose
now that we want to limit the circuit depth to some maximum value $D$. We can
run the same algorithm, setting this time $\eta$ to be
\begin{equation}
    \eta = \bigO\left(2^{-\delta D}\right).
\end{equation}

Replacing into expression \eqref{eq:NANDcomplexity}, we find that
\begin{equation}
    Q(\Phi; D)  = \bigO\left( \frac{ N}{D}  + \sqrt{N}\right).
\end{equation}

\paragraph{Parallelization.}

The problem of evaluating NAND trees is also amenable to parallelization. The
key observation is that, if for any given level of the tree we know the logical
value of all the nodes at that level, then we can infer $\Phi(x)$ without
performing any more queries to the input. Therefore, we solve the problem if,
for every node $v$ at that level, we run the quantum algorithm for evaluating
the NAND tree rooted at $v$.

Say that we want to limit our circuit depths to $D$. We partition the input
variables into $\bigO( N/D^2)$ subsets of $\bigO(D^2)$ variables each. To each
subset of variables corresponds a subtree of the total tree. For each such
subtree, we evaluate the logical value of the root node with an error
probability bounded by $D^2/N$, which we can do with $\bigO(\sqrt{D^2} \log(N/D^2))$
queries to $O$. Since we repeat this for all subtrees, the hybrid query
complexity becomes
\begin{equation}
    Q(\Phi; D) =
    \bigO\left(\frac{N}{D} \log\left(\frac{N}{D}\right)+ \sqrt{N} \right).
\end{equation}

We find that both the interpolation and parallelization methods can be applied
for evaluating balanced binary NAND trees. Although the resulting complexities
are close, the parallelization approach comes with an extra $\log(N/D)$ factor.
This problem illustrates that there are also situations where interpolation is
advantageous over parallelization.

\section{Conclusions}

In this paper, we suggest two distinct approaches for adapting a quantum
algorithm to a restricted-depth setting: parallelization and interpolation. An
algorithm is said to be ``parallelizable''  whenever we can split its action
into smaller, independent sub-problems; and ``interpolatable'' if the loss of
information caused by shortening the circuit depth can be compensated by
repeated runs of the shorter circuit. Therefore, informally, these two methods
can be understood as either ``breaking up the input'' (for parallelization) or
``breaking up the unitary procedure'' (for interpolation).

We argue that Quantum Singular Value Transformations (QSVT) closely relate to
the notion of interpolation, rather than parallelization. For QSVTs, a smaller
circuit depth corresponds to a polynomial approximation to a target function
of lower degree, which needs to be compensated by longer statistical sampling.

We apply these approaches to two problems with known quantum speed-ups: the
\(k\)-threshold function and perfectly balanced NAND trees. To the best of our
knowledge, neither of these problems had been studied in a hybrid,
restricted-depth setting. For the  \(k\)-threshold function, we show that
parallelization offers the best performance by a factor of $\tbigO(k)$ (in
terms of query complexity). In contrast, for evaluating perfectly balanced NAND
trees the interpolation method is the most efficient, differing by a factor of
$\bigO(\log (N/D))$. This way, we demonstrate that no technique
(parallelization or interpolation) is strictly better than the other -- each
one may be the best option depending on the problem at hand.

This shows that, when designing a quantum-classical hybrid algorithm obeying
certain (query) depth limitations, both of the proposed techniques can be
explored as a strategy for maintaining some of the speedup (over a fully
classical approach) of a quantum unrestricted-depth counterpart. Furthermore,
given the close connection between (depth unrestricted) algorithms formulated
in terms of QSVTs and the interpolation method, this implies that, when
searching for hybrid quantum-classical algorithms for a particular problem, it may
be a good option to start by formulating a (depth unrestricted) QSVT algorithm
for the problem, and then seeking to interpolate it.

We note that we only offered an example of a problem (perfectly balanced NAND
trees) where the interpolation beats parallelization by a logarithmic factor.
It would be interesting to find a problem for which the interpolation procedure
is polynomially more efficient than the corresponding parallelization, to rule
out the possibility that parallelization, whenever applicable, is always
optimal up to logarithmic factors. We leave the existence of such a problem as
an open question.

The definitions we have provided for the terms ``parallelization'' and
``interpolation'' are not strictly rigorous; they should be seen as general
strategies for restricted-depth computing, rather than formal notions. This
does not preclude that in some situations the two strategies may be
simultaneously at play, or that these classifications may not apply. As such,
we expect there is room for discussion on what other classes of methods may
exist besides the ones discussed here, and for other systematic approaches to
hybridization.

\begin{acknowledgments}
    We thank ~R.~de~Wolf for his comments on quantum query lower bounds for
    the problem of quantum counting, in the context of calculating the
    threshold function, and N.~Stamatopoulos for his comments regarding the
    proof of the parallelization method for the threshold function. We also
    thank the support from FCT -- Funda\c{c}\~{a}o para a Ci\^{e}ncia e a
    Tecnologia (Portugal), namely through project UIDB/04540/2020, as well as
    from projects QuantHEP and HQCC supported by the EU QuantERA ERA-NET Cofund
    in Quantum Technologies and by FCT (QuantERA/0001/2019 and
    QuantERA/004/2021, respectively), and from the EU Horizon Europe Quantum
    Flagship project EuRyQa (101070144). DM and MM acknowledge the support from
    FCT through scholarships 2020.04677.BD and 2021.05528.BD, respectively.
\end{acknowledgments}

\bibliography{citations}

%apsrev4-2.bst 2019-01-14 (MD) hand-edited version of apsrev4-1.bst
%Control: key (0)
%Control: author (8) initials jnrlst
%Control: editor formatted (1) identically to author
%Control: production of article title (0) allowed
%Control: page (0) single
%Control: year (1) truncated
%Control: production of eprint (0) enabled
\begin{thebibliography}{45}%
\makeatletter
\providecommand \@ifxundefined [1]{%
 \@ifx{#1\undefined}
}%
\providecommand \@ifnum [1]{%
 \ifnum #1\expandafter \@firstoftwo
 \else \expandafter \@secondoftwo
 \fi
}%
\providecommand \@ifx [1]{%
 \ifx #1\expandafter \@firstoftwo
 \else \expandafter \@secondoftwo
 \fi
}%
\providecommand \natexlab [1]{#1}%
\providecommand \enquote  [1]{``#1''}%
\providecommand \bibnamefont  [1]{#1}%
\providecommand \bibfnamefont [1]{#1}%
\providecommand \citenamefont [1]{#1}%
\providecommand \href@noop [0]{\@secondoftwo}%
\providecommand \href [0]{\begingroup \@sanitize@url \@href}%
\providecommand \@href[1]{\@@startlink{#1}\@@href}%
\providecommand \@@href[1]{\endgroup#1\@@endlink}%
\providecommand \@sanitize@url [0]{\catcode `\\12\catcode `\$12\catcode
  `\&12\catcode `\#12\catcode `\^12\catcode `\_12\catcode `\%12\relax}%
\providecommand \@@startlink[1]{}%
\providecommand \@@endlink[0]{}%
\providecommand \url  [0]{\begingroup\@sanitize@url \@url }%
\providecommand \@url [1]{\endgroup\@href {#1}{\urlprefix }}%
\providecommand \urlprefix  [0]{URL }%
\providecommand \Eprint [0]{\href }%
\providecommand \doibase [0]{https://doi.org/}%
\providecommand \selectlanguage [0]{\@gobble}%
\providecommand \bibinfo  [0]{\@secondoftwo}%
\providecommand \bibfield  [0]{\@secondoftwo}%
\providecommand \translation [1]{[#1]}%
\providecommand \BibitemOpen [0]{}%
\providecommand \bibitemStop [0]{}%
\providecommand \bibitemNoStop [0]{.\EOS\space}%
\providecommand \EOS [0]{\spacefactor3000\relax}%
\providecommand \BibitemShut  [1]{\csname bibitem#1\endcsname}%
\let\auto@bib@innerbib\@empty
%</preamble>
\bibitem [{\citenamefont {Bharti}\ \emph {et~al.}(2022)\citenamefont {Bharti},
  \citenamefont {Cervera-Lierta}, \citenamefont {Kyaw}, \citenamefont {Haug},
  \citenamefont {Alperin-Lea}, \citenamefont {Anand}, \citenamefont {Degroote},
  \citenamefont {Heimonen}, \citenamefont {Kottmann}, \citenamefont {Menke},
  \citenamefont {Mok}, \citenamefont {Sim}, \citenamefont {Kwek},\ and\
  \citenamefont {Aspuru-Guzik}}]{Bharti2022}%
  \BibitemOpen
  \bibfield  {author} {\bibinfo {author} {\bibfnamefont {K.}~\bibnamefont
  {Bharti}}, \bibinfo {author} {\bibfnamefont {A.}~\bibnamefont
  {Cervera-Lierta}}, \bibinfo {author} {\bibfnamefont {T.~H.}\ \bibnamefont
  {Kyaw}}, \bibinfo {author} {\bibfnamefont {T.}~\bibnamefont {Haug}}, \bibinfo
  {author} {\bibfnamefont {S.}~\bibnamefont {Alperin-Lea}}, \bibinfo {author}
  {\bibfnamefont {A.}~\bibnamefont {Anand}}, \bibinfo {author} {\bibfnamefont
  {M.}~\bibnamefont {Degroote}}, \bibinfo {author} {\bibfnamefont
  {H.}~\bibnamefont {Heimonen}}, \bibinfo {author} {\bibfnamefont {J.~S.}\
  \bibnamefont {Kottmann}}, \bibinfo {author} {\bibfnamefont {T.}~\bibnamefont
  {Menke}}, \bibinfo {author} {\bibfnamefont {W.-K.}\ \bibnamefont {Mok}},
  \bibinfo {author} {\bibfnamefont {S.}~\bibnamefont {Sim}}, \bibinfo {author}
  {\bibfnamefont {L.-C.}\ \bibnamefont {Kwek}},\ and\ \bibinfo {author}
  {\bibfnamefont {A.}~\bibnamefont {Aspuru-Guzik}},\ }\bibfield  {title}
  {\bibinfo {title} {Noisy intermediate-scale quantum algorithms},\ }\bibfield
  {journal} {\bibinfo  {journal} {Reviews of Modern Physics}\ }\textbf
  {\bibinfo {volume} {94}},\ \href
  {https://doi.org/10.1103/revmodphys.94.015004} {10.1103/revmodphys.94.015004}
  (\bibinfo {year} {2022})\BibitemShut {NoStop}%
\bibitem [{\citenamefont {Farhi}\ \emph {et~al.}(2014)\citenamefont {Farhi},
  \citenamefont {Goldstone},\ and\ \citenamefont {Gutmann}}]{Farhi2014}%
  \BibitemOpen
  \bibfield  {author} {\bibinfo {author} {\bibfnamefont {E.}~\bibnamefont
  {Farhi}}, \bibinfo {author} {\bibfnamefont {J.}~\bibnamefont {Goldstone}},\
  and\ \bibinfo {author} {\bibfnamefont {S.}~\bibnamefont {Gutmann}},\
  }\bibfield  {title} {\bibinfo {title} {A quantum approximate optimization
  algorithm},\ }\Eprint {https://arxiv.org/abs/1411.4028} {arXiv:1411.4028
  [quant-ph]}  (\bibinfo {year} {2014})\BibitemShut {NoStop}%
\bibitem [{\citenamefont {Peruzzo}\ \emph {et~al.}(2014)\citenamefont
  {Peruzzo}, \citenamefont {McClean}, \citenamefont {Shadbolt}, \citenamefont
  {Yung}, \citenamefont {Zhou}, \citenamefont {Love}, \citenamefont
  {Aspuru-Guzik},\ and\ \citenamefont {O'Brien}}]{VQE}%
  \BibitemOpen
  \bibfield  {author} {\bibinfo {author} {\bibfnamefont {A.}~\bibnamefont
  {Peruzzo}}, \bibinfo {author} {\bibfnamefont {J.}~\bibnamefont {McClean}},
  \bibinfo {author} {\bibfnamefont {P.}~\bibnamefont {Shadbolt}}, \bibinfo
  {author} {\bibfnamefont {M.~H.}\ \bibnamefont {Yung}}, \bibinfo {author}
  {\bibfnamefont {X.~Q.}\ \bibnamefont {Zhou}}, \bibinfo {author}
  {\bibfnamefont {P.~J.}\ \bibnamefont {Love}}, \bibinfo {author}
  {\bibfnamefont {A.}~\bibnamefont {Aspuru-Guzik}},\ and\ \bibinfo {author}
  {\bibfnamefont {J.~L.}\ \bibnamefont {O'Brien}},\ }\bibfield  {title}
  {\bibinfo {title} {A variational eigenvalue solver on a photonic quantum
  processor},\ }\bibfield  {journal} {\bibinfo  {journal} {Nature
  Communications 2014 5:1}\ }\textbf {\bibinfo {volume} {5}},\ \href
  {https://doi.org/10.1038/ncomms5213} {10.1038/ncomms5213} (\bibinfo {year}
  {2014})\BibitemShut {NoStop}%
\bibitem [{\citenamefont {McClean}\ \emph {et~al.}(2016)\citenamefont
  {McClean}, \citenamefont {Romero}, \citenamefont {Babbush},\ and\
  \citenamefont {Aspuru-Guzik}}]{McClean2016}%
  \BibitemOpen
  \bibfield  {author} {\bibinfo {author} {\bibfnamefont {J.~R.}\ \bibnamefont
  {McClean}}, \bibinfo {author} {\bibfnamefont {J.}~\bibnamefont {Romero}},
  \bibinfo {author} {\bibfnamefont {R.}~\bibnamefont {Babbush}},\ and\ \bibinfo
  {author} {\bibfnamefont {A.}~\bibnamefont {Aspuru-Guzik}},\ }\bibfield
  {title} {\bibinfo {title} {The theory of variational hybrid quantum-classical
  algorithms},\ }\href {https://doi.org/10.1088/1367-2630/18/2/023023}
  {\bibfield  {journal} {\bibinfo  {journal} {New Journal of Physics}\ }\textbf
  {\bibinfo {volume} {18}},\ \bibinfo {pages} {023023} (\bibinfo {year}
  {2016})}\BibitemShut {NoStop}%
\bibitem [{\citenamefont {Wecker}\ \emph {et~al.}(2015)\citenamefont {Wecker},
  \citenamefont {Hastings},\ and\ \citenamefont {Troyer}}]{Wecker}%
  \BibitemOpen
  \bibfield  {author} {\bibinfo {author} {\bibfnamefont {D.}~\bibnamefont
  {Wecker}}, \bibinfo {author} {\bibfnamefont {M.~B.}\ \bibnamefont
  {Hastings}},\ and\ \bibinfo {author} {\bibfnamefont {M.}~\bibnamefont
  {Troyer}},\ }\bibfield  {title} {\bibinfo {title} {Progress towards practical
  quantum variational algorithms},\ }\href
  {https://doi.org/10.1103/PhysRevA.92.042303} {\bibfield  {journal} {\bibinfo
  {journal} {Phys. Rev. A}\ }\textbf {\bibinfo {volume} {92}},\ \bibinfo
  {pages} {042303} (\bibinfo {year} {2015})}\BibitemShut {NoStop}%
\bibitem [{\citenamefont {Kandala}\ \emph {et~al.}(2017)\citenamefont
  {Kandala}, \citenamefont {Mezzacapo}, \citenamefont {Temme}, \citenamefont
  {Takita}, \citenamefont {Brink}, \citenamefont {Chow},\ and\ \citenamefont
  {Gambetta}}]{Kandala2017}%
  \BibitemOpen
  \bibfield  {author} {\bibinfo {author} {\bibfnamefont {A.}~\bibnamefont
  {Kandala}}, \bibinfo {author} {\bibfnamefont {A.}~\bibnamefont {Mezzacapo}},
  \bibinfo {author} {\bibfnamefont {K.}~\bibnamefont {Temme}}, \bibinfo
  {author} {\bibfnamefont {M.}~\bibnamefont {Takita}}, \bibinfo {author}
  {\bibfnamefont {M.}~\bibnamefont {Brink}}, \bibinfo {author} {\bibfnamefont
  {J.~M.}\ \bibnamefont {Chow}},\ and\ \bibinfo {author} {\bibfnamefont
  {J.~M.}\ \bibnamefont {Gambetta}},\ }\bibfield  {title} {\bibinfo {title}
  {Hardware-efficient variational quantum eigensolver for small molecules and
  quantum magnets},\ }\href {https://doi.org/10.1038/nature23879} {\bibfield
  {journal} {\bibinfo  {journal} {Nature}\ }\textbf {\bibinfo {volume} {549}},\
  \bibinfo {pages} {242} (\bibinfo {year} {2017})}\BibitemShut {NoStop}%
\bibitem [{\citenamefont {Farhi}\ and\ \citenamefont
  {Neven}(2018)}]{farhi2018classification}%
  \BibitemOpen
  \bibfield  {author} {\bibinfo {author} {\bibfnamefont {E.}~\bibnamefont
  {Farhi}}\ and\ \bibinfo {author} {\bibfnamefont {H.}~\bibnamefont {Neven}},\
  }\bibfield  {title} {\bibinfo {title} {Classification with quantum neural
  networks on near term processors},\ }\Eprint
  {https://arxiv.org/abs/1802.06002} {arXiv:1802.06002 [quant-ph]}  (\bibinfo
  {year} {2018})\BibitemShut {NoStop}%
\bibitem [{\citenamefont {Benedetti}\ \emph {et~al.}(2019)\citenamefont
  {Benedetti}, \citenamefont {Lloyd}, \citenamefont {Sack},\ and\ \citenamefont
  {Fiorentini}}]{Benedetti2019}%
  \BibitemOpen
  \bibfield  {author} {\bibinfo {author} {\bibfnamefont {M.}~\bibnamefont
  {Benedetti}}, \bibinfo {author} {\bibfnamefont {E.}~\bibnamefont {Lloyd}},
  \bibinfo {author} {\bibfnamefont {S.}~\bibnamefont {Sack}},\ and\ \bibinfo
  {author} {\bibfnamefont {M.}~\bibnamefont {Fiorentini}},\ }\bibfield  {title}
  {\bibinfo {title} {Parameterized quantum circuits as machine learning
  models},\ }\href {https://doi.org/10.1088/2058-9565/ab4eb5} {\bibfield
  {journal} {\bibinfo  {journal} {Quantum Science and Technology}\ }\textbf
  {\bibinfo {volume} {4}},\ \bibinfo {pages} {043001} (\bibinfo {year}
  {2019})}\BibitemShut {NoStop}%
\bibitem [{\citenamefont {Chen}\ \emph {et~al.}(2020)\citenamefont {Chen},
  \citenamefont {Yang}, \citenamefont {Qi}, \citenamefont {Chen}, \citenamefont
  {Ma},\ and\ \citenamefont {Goan}}]{chen2020variational}%
  \BibitemOpen
  \bibfield  {author} {\bibinfo {author} {\bibfnamefont {S.~Y.-C.}\
  \bibnamefont {Chen}}, \bibinfo {author} {\bibfnamefont {C.-H.~H.}\
  \bibnamefont {Yang}}, \bibinfo {author} {\bibfnamefont {J.}~\bibnamefont
  {Qi}}, \bibinfo {author} {\bibfnamefont {P.-Y.}\ \bibnamefont {Chen}},
  \bibinfo {author} {\bibfnamefont {X.}~\bibnamefont {Ma}},\ and\ \bibinfo
  {author} {\bibfnamefont {H.-S.}\ \bibnamefont {Goan}},\ }\bibfield  {title}
  {\bibinfo {title} {Variational quantum circuits for deep reinforcement
  learning},\ }\href {https://doi.org/10.1109/ACCESS.2020.3010470} {\bibfield
  {journal} {\bibinfo  {journal} {IEEE Access}\ }\textbf {\bibinfo {volume}
  {8}},\ \bibinfo {pages} {141007} (\bibinfo {year} {2020})}\BibitemShut
  {NoStop}%
\bibitem [{\citenamefont {Banchi}(2022)}]{Banchi2022}%
  \BibitemOpen
  \bibfield  {author} {\bibinfo {author} {\bibfnamefont {L.}~\bibnamefont
  {Banchi}},\ }\bibfield  {title} {\bibinfo {title} {Robust quantum classifiers
  via {NISQ} adversarial learning},\ }\href
  {https://doi.org/10.1038/s43588-022-00359-1} {\bibfield  {journal} {\bibinfo
  {journal} {Nature Computational Science}\ }\textbf {\bibinfo {volume} {2}},\
  \bibinfo {pages} {699} (\bibinfo {year} {2022})}\BibitemShut {NoStop}%
\bibitem [{\citenamefont {Buffoni}\ and\ \citenamefont
  {Caruso}(2020)}]{Buffoni2020}%
  \BibitemOpen
  \bibfield  {author} {\bibinfo {author} {\bibfnamefont {L.}~\bibnamefont
  {Buffoni}}\ and\ \bibinfo {author} {\bibfnamefont {F.}~\bibnamefont
  {Caruso}},\ }\bibfield  {title} {\bibinfo {title} {New trends in quantum
  machine learning},\ }\href {https://doi.org/10.1209/0295-5075/132/60004}
  {\bibfield  {journal} {\bibinfo  {journal} {Europhysics Letters}\ }\textbf
  {\bibinfo {volume} {132}},\ \bibinfo {pages} {60004} (\bibinfo {year}
  {2020})}\BibitemShut {NoStop}%
\bibitem [{\citenamefont {McClean}\ \emph {et~al.}(2018)\citenamefont
  {McClean}, \citenamefont {Boixo}, \citenamefont {Smelyanskiy}, \citenamefont
  {Babbush},\ and\ \citenamefont {Neven}}]{McClean2018}%
  \BibitemOpen
  \bibfield  {author} {\bibinfo {author} {\bibfnamefont {J.~R.}\ \bibnamefont
  {McClean}}, \bibinfo {author} {\bibfnamefont {S.}~\bibnamefont {Boixo}},
  \bibinfo {author} {\bibfnamefont {V.~N.}\ \bibnamefont {Smelyanskiy}},
  \bibinfo {author} {\bibfnamefont {R.}~\bibnamefont {Babbush}},\ and\ \bibinfo
  {author} {\bibfnamefont {H.}~\bibnamefont {Neven}},\ }\bibfield  {title}
  {\bibinfo {title} {Barren plateaus in quantum neural network training
  landscapes},\ }\bibfield  {journal} {\bibinfo  {journal} {Nature
  Communications}\ }\textbf {\bibinfo {volume} {9}},\ \href
  {https://doi.org/10.1038/s41467-018-07090-4} {10.1038/s41467-018-07090-4}
  (\bibinfo {year} {2018})\BibitemShut {NoStop}%
\bibitem [{\citenamefont {Anschuetz}\ and\ \citenamefont
  {Kiani}(2022)}]{Anschuetz2022}%
  \BibitemOpen
  \bibfield  {author} {\bibinfo {author} {\bibfnamefont {E.~R.}\ \bibnamefont
  {Anschuetz}}\ and\ \bibinfo {author} {\bibfnamefont {B.~T.}\ \bibnamefont
  {Kiani}},\ }\bibfield  {title} {\bibinfo {title} {Quantum variational
  algorithms are swamped with traps},\ }\bibfield  {journal} {\bibinfo
  {journal} {Nature Communications}\ }\textbf {\bibinfo {volume} {13}},\ \href
  {https://doi.org/10.1038/s41467-022-35364-5} {10.1038/s41467-022-35364-5}
  (\bibinfo {year} {2022})\BibitemShut {NoStop}%
\bibitem [{\citenamefont {Sun}\ and\ \citenamefont {Zheng}(2019)}]{Sun2019}%
  \BibitemOpen
  \bibfield  {author} {\bibinfo {author} {\bibfnamefont {X.}~\bibnamefont
  {Sun}}\ and\ \bibinfo {author} {\bibfnamefont {Y.}~\bibnamefont {Zheng}},\
  }\bibfield  {title} {\bibinfo {title} {Hybrid decision trees: Longer quantum
  time is strictly more powerful},\ }\Eprint {https://arxiv.org/abs/1911.13091}
  {arXiv:1911.13091 [cs.CC]}  (\bibinfo {year} {2019})\BibitemShut {NoStop}%
\bibitem [{\citenamefont {Aaronson}\ and\ \citenamefont
  {Ambainis}(2018)}]{AaronsonAmbainis}%
  \BibitemOpen
  \bibfield  {author} {\bibinfo {author} {\bibfnamefont {S.}~\bibnamefont
  {Aaronson}}\ and\ \bibinfo {author} {\bibfnamefont {A.}~\bibnamefont
  {Ambainis}},\ }\bibfield  {title} {\bibinfo {title} {Forrelation: A problem
  that optimally separates quantum from classical computing},\ }\href
  {https://doi.org/10.1137/15M1050902} {\bibfield  {journal} {\bibinfo
  {journal} {SIAM Journal on Computing}\ }\textbf {\bibinfo {volume} {47}},\
  \bibinfo {pages} {982} (\bibinfo {year} {2018})}\BibitemShut {NoStop}%
\bibitem [{\citenamefont {Wang}\ \emph {et~al.}(2019)\citenamefont {Wang},
  \citenamefont {Higgott},\ and\ \citenamefont {Brierley}}]{Wang2019}%
  \BibitemOpen
  \bibfield  {author} {\bibinfo {author} {\bibfnamefont {D.}~\bibnamefont
  {Wang}}, \bibinfo {author} {\bibfnamefont {O.}~\bibnamefont {Higgott}},\ and\
  \bibinfo {author} {\bibfnamefont {S.}~\bibnamefont {Brierley}},\ }\bibfield
  {title} {\bibinfo {title} {Accelerated variational quantum eigensolver},\
  }\bibfield  {journal} {\bibinfo  {journal} {Physical Review Letters}\
  }\textbf {\bibinfo {volume} {122}},\ \href
  {https://doi.org/10.1103/physrevlett.122.140504}
  {10.1103/physrevlett.122.140504} (\bibinfo {year} {2019})\BibitemShut
  {NoStop}%
\bibitem [{\citenamefont {Pérez-Salinas}\ \emph {et~al.}(2022)\citenamefont
  {Pérez-Salinas}, \citenamefont {Draškić}, \citenamefont {Tura},\ and\
  \citenamefont {Dunjko}}]{PeresSalinas2022}%
  \BibitemOpen
  \bibfield  {author} {\bibinfo {author} {\bibfnamefont {A.}~\bibnamefont
  {Pérez-Salinas}}, \bibinfo {author} {\bibfnamefont {R.}~\bibnamefont
  {Draškić}}, \bibinfo {author} {\bibfnamefont {J.}~\bibnamefont {Tura}},\
  and\ \bibinfo {author} {\bibfnamefont {V.}~\bibnamefont {Dunjko}},\
  }\bibfield  {title} {\bibinfo {title} {Reduce\&chop: Shallow circuits for
  deeper problems},\ }\Eprint {https://arxiv.org/abs/2212.11862}
  {arXiv:2212.11862 [quant-ph]}  (\bibinfo {year} {2022})\BibitemShut {NoStop}%
\bibitem [{\citenamefont {Zalka}(1999)}]{Zalka1999}%
  \BibitemOpen
  \bibfield  {author} {\bibinfo {author} {\bibfnamefont {C.}~\bibnamefont
  {Zalka}},\ }\bibfield  {title} {\bibinfo {title} {Grover's quantum searching
  algorithm is optimal},\ }\href {https://doi.org/10.1103/PhysRevA.60.2746}
  {\bibfield  {journal} {\bibinfo  {journal} {Physical Review A}\ }\textbf
  {\bibinfo {volume} {60}},\ \bibinfo {pages} {2746} (\bibinfo {year}
  {1999})}\BibitemShut {NoStop}%
\bibitem [{\citenamefont {Grover}\ and\ \citenamefont
  {Radhakrishnan}(2004)}]{GroverRadhakrishnan2004}%
  \BibitemOpen
  \bibfield  {author} {\bibinfo {author} {\bibfnamefont {L.~K.}\ \bibnamefont
  {Grover}}\ and\ \bibinfo {author} {\bibfnamefont {J.}~\bibnamefont
  {Radhakrishnan}},\ }\bibfield  {title} {\bibinfo {title} {Quantum search for
  multiple items using parallel queries},\ }\Eprint
  {https://arxiv.org/abs/0407217} {arXiv:0407217 [quant-ph]}  (\bibinfo {year}
  {2004})\BibitemShut {NoStop}%
\bibitem [{\citenamefont {Jeffery}\ \emph {et~al.}(2017)\citenamefont
  {Jeffery}, \citenamefont {Magniez},\ and\ \citenamefont
  {Wolf}}]{Jeffery2017}%
  \BibitemOpen
  \bibfield  {author} {\bibinfo {author} {\bibfnamefont {S.}~\bibnamefont
  {Jeffery}}, \bibinfo {author} {\bibfnamefont {F.}~\bibnamefont {Magniez}},\
  and\ \bibinfo {author} {\bibfnamefont {R.}~\bibnamefont {Wolf}},\ }\bibfield
  {title} {\bibinfo {title} {Optimal parallel quantum query algorithms},\
  }\href {https://doi.org/10.1007/s00453-016-0206-z} {\bibfield  {journal}
  {\bibinfo  {journal} {Algorithmica}\ }\textbf {\bibinfo {volume} {79}},\
  \bibinfo {pages} {509} (\bibinfo {year} {2017})}\BibitemShut {NoStop}%
\bibitem [{\citenamefont {Wang}\ \emph {et~al.}(2021)\citenamefont {Wang},
  \citenamefont {Koh}, \citenamefont {Johnson},\ and\ \citenamefont
  {Cao}}]{WangKohJohnsonCao2021}%
  \BibitemOpen
  \bibfield  {author} {\bibinfo {author} {\bibfnamefont {G.}~\bibnamefont
  {Wang}}, \bibinfo {author} {\bibfnamefont {D.~E.}\ \bibnamefont {Koh}},
  \bibinfo {author} {\bibfnamefont {P.~D.}\ \bibnamefont {Johnson}},\ and\
  \bibinfo {author} {\bibfnamefont {Y.}~\bibnamefont {Cao}},\ }\bibfield
  {title} {\bibinfo {title} {Minimizing estimation runtime on noisy quantum
  computers},\ }\bibfield  {journal} {\bibinfo  {journal} {PRX Quantum}\
  }\textbf {\bibinfo {volume} {2}},\ \href
  {https://doi.org/10.1103/prxquantum.2.010346} {10.1103/prxquantum.2.010346}
  (\bibinfo {year} {2021})\BibitemShut {NoStop}%
\bibitem [{\citenamefont {Giurgica-Tiron}\ \emph {et~al.}(2022)\citenamefont
  {Giurgica-Tiron}, \citenamefont {Kerenidis}, \citenamefont {Labib},
  \citenamefont {Prakash},\ and\ \citenamefont {Zeng}}]{GiurgicaTiron2022}%
  \BibitemOpen
  \bibfield  {author} {\bibinfo {author} {\bibfnamefont {T.}~\bibnamefont
  {Giurgica-Tiron}}, \bibinfo {author} {\bibfnamefont {I.}~\bibnamefont
  {Kerenidis}}, \bibinfo {author} {\bibfnamefont {F.}~\bibnamefont {Labib}},
  \bibinfo {author} {\bibfnamefont {A.}~\bibnamefont {Prakash}},\ and\ \bibinfo
  {author} {\bibfnamefont {W.}~\bibnamefont {Zeng}},\ }\bibfield  {title}
  {\bibinfo {title} {Low depth algorithms for quantum amplitude estimation},\
  }\href {https://doi.org/10.22331/q-2022-06-27-745} {\bibfield  {journal}
  {\bibinfo  {journal} {Quantum}\ }\textbf {\bibinfo {volume} {6}},\ \bibinfo
  {pages} {745} (\bibinfo {year} {2022})}\BibitemShut {NoStop}%
\bibitem [{\citenamefont {Magano}\ and\ \citenamefont
  {Mur{\c{c}}a}(2022)}]{Magano2022}%
  \BibitemOpen
  \bibfield  {author} {\bibinfo {author} {\bibfnamefont {D.}~\bibnamefont
  {Magano}}\ and\ \bibinfo {author} {\bibfnamefont {M.}~\bibnamefont
  {Mur{\c{c}}a}},\ }\bibfield  {title} {\bibinfo {title} {Simplifying a
  classical-quantum algorithm interpolation with quantum singular value
  transformations},\ }\bibfield  {journal} {\bibinfo  {journal} {Physical
  Review A}\ }\textbf {\bibinfo {volume} {106}},\ \href
  {https://doi.org/10.1103/physreva.106.062419} {10.1103/physreva.106.062419}
  (\bibinfo {year} {2022})\BibitemShut {NoStop}%
\bibitem [{\citenamefont {Gily{\'{e}}n}\ \emph {et~al.}(2019)\citenamefont
  {Gily{\'{e}}n}, \citenamefont {Su}, \citenamefont {Low},\ and\ \citenamefont
  {Wiebe}}]{Gilyen2019}%
  \BibitemOpen
  \bibfield  {author} {\bibinfo {author} {\bibfnamefont {A.}~\bibnamefont
  {Gily{\'{e}}n}}, \bibinfo {author} {\bibfnamefont {Y.}~\bibnamefont {Su}},
  \bibinfo {author} {\bibfnamefont {G.~H.}\ \bibnamefont {Low}},\ and\ \bibinfo
  {author} {\bibfnamefont {N.}~\bibnamefont {Wiebe}},\ }\bibfield  {title}
  {\bibinfo {title} {Quantum singular value transformation and beyond:
  exponential improvements for quantum matrix arithmetics},\ }in\ \href
  {https://doi.org/10.1145/3313276.3316366} {\emph {\bibinfo {booktitle}
  {Proceedings of the 51st Annual {ACM} {SIGACT} Symposium on Theory of
  Computing}}}\ (\bibinfo  {publisher} {{ACM}},\ \bibinfo {year}
  {2019})\BibitemShut {NoStop}%
\bibitem [{\citenamefont {Beals}\ \emph {et~al.}(2001)\citenamefont {Beals},
  \citenamefont {Buhrman}, \citenamefont {Cleve}, \citenamefont {Mosca},\ and\
  \citenamefont {de~Wolf}}]{Beals2001}%
  \BibitemOpen
  \bibfield  {author} {\bibinfo {author} {\bibfnamefont {R.}~\bibnamefont
  {Beals}}, \bibinfo {author} {\bibfnamefont {H.}~\bibnamefont {Buhrman}},
  \bibinfo {author} {\bibfnamefont {R.}~\bibnamefont {Cleve}}, \bibinfo
  {author} {\bibfnamefont {M.}~\bibnamefont {Mosca}},\ and\ \bibinfo {author}
  {\bibfnamefont {R.}~\bibnamefont {de~Wolf}},\ }\bibfield  {title} {\bibinfo
  {title} {Quantum lower bounds by polynomials},\ }\href
  {https://doi.org/10.1145/502090.502097} {\bibfield  {journal} {\bibinfo
  {journal} {Journal of the {ACM}}\ }\textbf {\bibinfo {volume} {48}},\
  \bibinfo {pages} {778} (\bibinfo {year} {2001})}\BibitemShut {NoStop}%
\bibitem [{\citenamefont {Childs}\ \emph {et~al.}(2009)\citenamefont {Childs},
  \citenamefont {Cleve}, \citenamefont {Jordan},\ and\ \citenamefont
  {Yonge-Mallo}}]{ChildsCleveJordanYongeMallo2009}%
  \BibitemOpen
  \bibfield  {author} {\bibinfo {author} {\bibfnamefont {A.~M.}\ \bibnamefont
  {Childs}}, \bibinfo {author} {\bibfnamefont {R.}~\bibnamefont {Cleve}},
  \bibinfo {author} {\bibfnamefont {S.~P.}\ \bibnamefont {Jordan}},\ and\
  \bibinfo {author} {\bibfnamefont {D.}~\bibnamefont {Yonge-Mallo}},\
  }\bibfield  {title} {\bibinfo {title} {Discrete-query quantum algorithm for
  nand trees},\ }\href {https://doi.org/10.4086/toc.2009.v005a005} {\bibfield
  {journal} {\bibinfo  {journal} {Theory of Computing}\ }\textbf {\bibinfo
  {volume} {5}},\ \bibinfo {pages} {119} (\bibinfo {year} {2009})}\BibitemShut
  {NoStop}%
\bibitem [{\citenamefont {Ambainis}\ \emph {et~al.}(2010)\citenamefont
  {Ambainis}, \citenamefont {Childs}, \citenamefont {Reichardt}, \citenamefont
  {{\v{S}}palek},\ and\ \citenamefont {Zhang}}]{Ambainis2010}%
  \BibitemOpen
  \bibfield  {author} {\bibinfo {author} {\bibfnamefont {A.}~\bibnamefont
  {Ambainis}}, \bibinfo {author} {\bibfnamefont {A.~M.}\ \bibnamefont
  {Childs}}, \bibinfo {author} {\bibfnamefont {B.~W.}\ \bibnamefont
  {Reichardt}}, \bibinfo {author} {\bibfnamefont {R.}~\bibnamefont
  {{\v{S}}palek}},\ and\ \bibinfo {author} {\bibfnamefont {S.}~\bibnamefont
  {Zhang}},\ }\bibfield  {title} {\bibinfo {title} {Any {AND}-{OR} formula of
  size {N} can be evaluated in time {$N^{1/2+o(1)}$} on a quantum computer},\
  }\href {https://doi.org/10.1137/080712167} {\bibfield  {journal} {\bibinfo
  {journal} {{SIAM} Journal on Computing}\ }\textbf {\bibinfo {volume} {39}},\
  \bibinfo {pages} {2513} (\bibinfo {year} {2010})}\BibitemShut {NoStop}%
\bibitem [{\citenamefont {Ambainis}(2019)}]{AndrisReview2017}%
  \BibitemOpen
  \bibfield  {author} {\bibinfo {author} {\bibfnamefont {A.}~\bibnamefont
  {Ambainis}},\ }\bibfield  {title} {\bibinfo {title} {Understanding algorithms
  via query complexity},\ }in\ \href
  {https://doi.org/10.1142/9789813272880_0181} {\emph {\bibinfo {booktitle}
  {Proceedings of the International Congress of Mathematicians ({ICM} 2018)}}}\
  (\bibinfo  {publisher} {{World} {Scientific}},\ \bibinfo {year} {2019})\ pp.\
  \bibinfo {pages} {3265--3285},\ \bibinfo {note} {available as
  \href{https://arxiv.org/abs/1712.06349}{arXiv:1712.06349
  [quant-ph]}}\BibitemShut {NoStop}%
\bibitem [{\citenamefont {Buhrman}\ and\ \citenamefont {{de
  Wolf}}(2002)}]{BdW02}%
  \BibitemOpen
  \bibfield  {author} {\bibinfo {author} {\bibfnamefont {H.}~\bibnamefont
  {Buhrman}}\ and\ \bibinfo {author} {\bibfnamefont {R.}~\bibnamefont {{de
  Wolf}}},\ }\bibfield  {title} {\bibinfo {title} {Complexity measures and
  decision tree complexity: a survey},\ }\href
  {https://doi.org/https://doi.org/10.1016/S0304-3975(01)00144-X} {\bibfield
  {journal} {\bibinfo  {journal} {Theoretical Computer Science}\ }\textbf
  {\bibinfo {volume} {288}},\ \bibinfo {pages} {21} (\bibinfo {year} {2002})},\
  \bibinfo {note} {complexity and Logic}\BibitemShut {NoStop}%
\bibitem [{\citenamefont {Grover}(1997)}]{Grover1996}%
  \BibitemOpen
  \bibfield  {author} {\bibinfo {author} {\bibfnamefont {L.}~\bibnamefont
  {Grover}},\ }\bibfield  {title} {\bibinfo {title} {Quantum mechanics helps in
  searching for a needle in a haystack},\ }\href
  {https://doi.org/10.1103/PhysRevLett.79.325} {\bibfield  {journal} {\bibinfo
  {journal} {Phys. Rev. Lett.}\ }\textbf {\bibinfo {volume} {79}},\ \bibinfo
  {pages} {325} (\bibinfo {year} {1997})}\BibitemShut {NoStop}%
\bibitem [{\citenamefont {Shor}(1994)}]{Shor1994}%
  \BibitemOpen
  \bibfield  {author} {\bibinfo {author} {\bibfnamefont {P.}~\bibnamefont
  {Shor}},\ }\bibfield  {title} {\bibinfo {title} {Algorithms for quantum
  computation: discrete logarithms and factoring},\ }in\ \href
  {https://doi.org/10.1109/sfcs.1994.365700} {\emph {\bibinfo {booktitle}
  {Proceedings 35th Annual Symposium on Foundations of Computer Science}}}\
  (\bibinfo  {publisher} {{IEEE} Comput. Soc. Press},\ \bibinfo {year}
  {1994})\BibitemShut {NoStop}%
\bibitem [{\citenamefont {Ambainis}(2007)}]{Ambainis2007}%
  \BibitemOpen
  \bibfield  {author} {\bibinfo {author} {\bibfnamefont {A.}~\bibnamefont
  {Ambainis}},\ }\bibfield  {title} {\bibinfo {title} {Quantum walk algorithm
  for element distinctness},\ }\href
  {https://doi.org/10.1137/s0097539705447311} {\bibfield  {journal} {\bibinfo
  {journal} {{SIAM} Journal on Computing}\ }\textbf {\bibinfo {volume} {37}},\
  \bibinfo {pages} {210} (\bibinfo {year} {2007})}\BibitemShut {NoStop}%
\bibitem [{\citenamefont {Ambainis}(2002)}]{Ambainis2002}%
  \BibitemOpen
  \bibfield  {author} {\bibinfo {author} {\bibfnamefont {A.}~\bibnamefont
  {Ambainis}},\ }\bibfield  {title} {\bibinfo {title} {Quantum lower bounds by
  quantum arguments},\ }\href {https://doi.org/10.1006/jcss.2002.1826}
  {\bibfield  {journal} {\bibinfo  {journal} {Journal of Computer and System
  Sciences}\ }\textbf {\bibinfo {volume} {64}},\ \bibinfo {pages} {750}
  (\bibinfo {year} {2002})}\BibitemShut {NoStop}%
\bibitem [{\citenamefont {Martyn}\ \emph {et~al.}(2021)\citenamefont {Martyn},
  \citenamefont {Rossi}, \citenamefont {Tan},\ and\ \citenamefont
  {Chuang}}]{Martyn2021}%
  \BibitemOpen
  \bibfield  {author} {\bibinfo {author} {\bibfnamefont {J.~M.}\ \bibnamefont
  {Martyn}}, \bibinfo {author} {\bibfnamefont {Z.~M.}\ \bibnamefont {Rossi}},
  \bibinfo {author} {\bibfnamefont {A.~K.}\ \bibnamefont {Tan}},\ and\ \bibinfo
  {author} {\bibfnamefont {I.~L.}\ \bibnamefont {Chuang}},\ }\bibfield  {title}
  {\bibinfo {title} {Grand unification of quantum algorithms},\ }\bibfield
  {journal} {\bibinfo  {journal} {{PRX} Quantum}\ }\textbf {\bibinfo {volume}
  {2}},\ \href {https://doi.org/10.1103/prxquantum.2.040203}
  {10.1103/prxquantum.2.040203} (\bibinfo {year} {2021})\BibitemShut {NoStop}%
\bibitem [{\citenamefont {Lin}(2022)}]{Lin2022}%
  \BibitemOpen
  \bibfield  {author} {\bibinfo {author} {\bibfnamefont {L.}~\bibnamefont
  {Lin}},\ }\bibfield  {title} {\bibinfo {title} {Lecture notes on quantum
  algorithms for scientific computation},\ }\Eprint
  {https://arxiv.org/abs/2201.08309} {arXiv:2201.08309 [quant-ph]}  (\bibinfo
  {year} {2022})\BibitemShut {NoStop}%
\bibitem [{\citenamefont {Low}\ and\ \citenamefont
  {Chuang}(2019)}]{LowChuang19}%
  \BibitemOpen
  \bibfield  {author} {\bibinfo {author} {\bibfnamefont {G.~H.}\ \bibnamefont
  {Low}}\ and\ \bibinfo {author} {\bibfnamefont {I.~L.}\ \bibnamefont
  {Chuang}},\ }\bibfield  {title} {\bibinfo {title} {Hamiltonian simulation by
  qubitization},\ }\href {https://doi.org/10.22331/q-2019-07-12-163} {\bibfield
   {journal} {\bibinfo  {journal} {Quantum}\ }\textbf {\bibinfo {volume} {3}},\
  \bibinfo {pages} {163} (\bibinfo {year} {2019})}\BibitemShut {NoStop}%
\bibitem [{\citenamefont {Chakraborty}\ \emph {et~al.}(2018)\citenamefont
  {Chakraborty}, \citenamefont {Gily{\'e}n},\ and\ \citenamefont
  {Jeffery}}]{Chakraborty2018}%
  \BibitemOpen
  \bibfield  {author} {\bibinfo {author} {\bibfnamefont {S.}~\bibnamefont
  {Chakraborty}}, \bibinfo {author} {\bibfnamefont {A.}~\bibnamefont
  {Gily{\'e}n}},\ and\ \bibinfo {author} {\bibfnamefont {S.}~\bibnamefont
  {Jeffery}},\ }\bibfield  {title} {\bibinfo {title} {The power of
  block-encoded matrix powers: improved regression techniques via faster
  hamiltonian simulation},\ }in\ \href@noop {} {\emph {\bibinfo {booktitle}
  {International Colloquium on Automata, Languages and Programming}}}\
  (\bibinfo {year} {2018})\BibitemShut {NoStop}%
\bibitem [{Note1()}]{Note1}%
  \BibitemOpen
  \bibinfo {note} {By $\Pi $-controlled-NOT we mean an operation that acts as a
  NOT gate controlled on a given state being in the image of $\Pi /\protect
  \tilde \Pi $.}\BibitemShut {Stop}%
\bibitem [{\citenamefont {Low}(2017)}]{LowPhD}%
  \BibitemOpen
  \bibfield  {author} {\bibinfo {author} {\bibfnamefont {G.~H.}\ \bibnamefont
  {Low}},\ }\emph {\bibinfo {title} {Quantum signal processing by single-qubit
  dynamics}},\ \href@noop {} {Ph.D. thesis},\ \bibinfo  {school} {Massachusetts
  Institute of Technology} (\bibinfo {year} {2017})\BibitemShut {NoStop}%
\bibitem [{\citenamefont {Brassard}\ \emph {et~al.}(2002)\citenamefont
  {Brassard}, \citenamefont {H{\o}yer}, \citenamefont {Mosca},\ and\
  \citenamefont {Tapp}}]{Brassard2002}%
  \BibitemOpen
  \bibfield  {author} {\bibinfo {author} {\bibfnamefont {G.}~\bibnamefont
  {Brassard}}, \bibinfo {author} {\bibfnamefont {P.}~\bibnamefont {H{\o}yer}},
  \bibinfo {author} {\bibfnamefont {M.}~\bibnamefont {Mosca}},\ and\ \bibinfo
  {author} {\bibfnamefont {A.}~\bibnamefont {Tapp}},\ }\bibfield  {title}
  {\bibinfo {title} {Quantum amplitude amplification and estimation},\ }\href
  {https://doi.org/10.1090/conm/305} {\bibfield  {journal} {\bibinfo  {journal}
  {Quantum Computation and Information}\ }\textbf {\bibinfo {volume} {305}},\
  \bibinfo {pages} {53} (\bibinfo {year} {2002})}\BibitemShut {NoStop}%
\bibitem [{\citenamefont {Raab}\ and\ \citenamefont
  {Steger}(1998)}]{BallsIntoBins}%
  \BibitemOpen
  \bibfield  {author} {\bibinfo {author} {\bibfnamefont {M.}~\bibnamefont
  {Raab}}\ and\ \bibinfo {author} {\bibfnamefont {A.}~\bibnamefont {Steger}},\
  }\bibfield  {title} {\bibinfo {title} {``balls into bins'' --- a simple and
  tight analysis},\ }in\ \href@noop {} {\emph {\bibinfo {booktitle}
  {Randomization and Approximation Techniques in Computer Science}}},\ \bibinfo
  {editor} {edited by\ \bibinfo {editor} {\bibfnamefont {M.}~\bibnamefont
  {Luby}}, \bibinfo {editor} {\bibfnamefont {J.~D.~P.}\ \bibnamefont {Rolim}},\
  and\ \bibinfo {editor} {\bibfnamefont {M.}~\bibnamefont {Serna}}}\ (\bibinfo
  {publisher} {Springer Berlin Heidelberg},\ \bibinfo {address} {Berlin,
  Heidelberg},\ \bibinfo {year} {1998})\ pp.\ \bibinfo {pages}
  {159--170}\BibitemShut {NoStop}%
\bibitem [{\citenamefont {Farhi}\ \emph {et~al.}(2008)\citenamefont {Farhi},
  \citenamefont {Goldstone},\ and\ \citenamefont
  {Gutmann}}]{FarhiGoldstoneGutmann2007}%
  \BibitemOpen
  \bibfield  {author} {\bibinfo {author} {\bibfnamefont {E.}~\bibnamefont
  {Farhi}}, \bibinfo {author} {\bibfnamefont {J.}~\bibnamefont {Goldstone}},\
  and\ \bibinfo {author} {\bibfnamefont {S.}~\bibnamefont {Gutmann}},\
  }\bibfield  {title} {\bibinfo {title} {A quantum algorithm for the
  hamiltonian nand tree},\ }\href {https://doi.org/10.4086/toc.2008.v004a008}
  {\bibfield  {journal} {\bibinfo  {journal} {Theory of Computing}\ }\textbf
  {\bibinfo {volume} {4}},\ \bibinfo {pages} {169} (\bibinfo {year}
  {2008})}\BibitemShut {NoStop}%
\bibitem [{\citenamefont {Bshouty}\ \emph {et~al.}(1995)\citenamefont
  {Bshouty}, \citenamefont {Cleve},\ and\ \citenamefont
  {Eberly}}]{Bshouty1995}%
  \BibitemOpen
  \bibfield  {author} {\bibinfo {author} {\bibfnamefont {N.~H.}\ \bibnamefont
  {Bshouty}}, \bibinfo {author} {\bibfnamefont {R.}~\bibnamefont {Cleve}},\
  and\ \bibinfo {author} {\bibfnamefont {W.}~\bibnamefont {Eberly}},\
  }\bibfield  {title} {\bibinfo {title} {Size-depth tradeoffs for algebraic
  formulas},\ }\href {https://doi.org/10.1137/s0097539792232586} {\bibfield
  {journal} {\bibinfo  {journal} {{SIAM} Journal on Computing}\ }\textbf
  {\bibinfo {volume} {24}},\ \bibinfo {pages} {682} (\bibinfo {year}
  {1995})}\BibitemShut {NoStop}%
\bibitem [{\citenamefont {Bonet}\ and\ \citenamefont
  {Buss}(1994)}]{Bonnet1994}%
  \BibitemOpen
  \bibfield  {author} {\bibinfo {author} {\bibfnamefont {M.~L.}\ \bibnamefont
  {Bonet}}\ and\ \bibinfo {author} {\bibfnamefont {S.~R.}\ \bibnamefont
  {Buss}},\ }\bibfield  {title} {\bibinfo {title} {Size-depth tradeoffs for
  boolean formulae},\ }\href
  {https://doi.org/https://doi.org/10.1016/0020-0190(94)90093-0} {\bibfield
  {journal} {\bibinfo  {journal} {Information Processing Letters}\ }\textbf
  {\bibinfo {volume} {49}},\ \bibinfo {pages} {151} (\bibinfo {year}
  {1994})}\BibitemShut {NoStop}%
\bibitem [{\citenamefont {Low}\ and\ \citenamefont {Chuang}(2017)}]{Low2017}%
  \BibitemOpen
  \bibfield  {author} {\bibinfo {author} {\bibfnamefont {G.~H.}\ \bibnamefont
  {Low}}\ and\ \bibinfo {author} {\bibfnamefont {I.~L.}\ \bibnamefont
  {Chuang}},\ }\bibfield  {title} {\bibinfo {title} {Hamiltonian simulation by
  uniform spectral amplification},\ }\Eprint {https://arxiv.org/abs/1707.05391}
  {arXiv:1707.05391 [quant-ph]}  (\bibinfo {year} {2017})\BibitemShut {NoStop}%
\end{thebibliography}%

\vfill

\appendix

\section{Threshold function -- proof of parallelization method\label{sec:threshold_proof}}

From Brassard \et al. \cite{Brassard2002}, there is a constant $c$ such that
the error for our estimate of $\HammingWeight{x \Sslash V_i}$ is bounded as
\begin{equation}
    \epsilon_i < c \frac{\sqrt{N \HammingWeight{x \Sslash V_i} / p}}{D}.
\end{equation}

We analyse separately the cases where $\Threshold_k(x) = 0$ and
$\Threshold_k(x) = 1$.

If $\Threshold_k(x) = 0$, the following possible relations between $p$, $k$,
and $\HammingWeight{x}$ need to be considered.

\begin{enumerate}
\item $p \log p \leq \HammingWeight{x} \leq k$.
    From the result of Raab and Steger (equation \eqref{eq:ballsintobins}), we
        know that $\HammingWeight{x \Sslash V_i} = \bigO(\HammingWeight{x} /
        p)$ for all $i$. So, by our expression for the error
        \eqref{eq:xprecision}, we see that $\epsilon_i =
        \bigO(\sqrt{\HammingWeight{x} / k}) = \bigO(1)$.

\item $\HammingWeight{x} \leq p \log p \leq k$.
    Now we know that $\HammingWeight{x \Sslash V_i} = \bigO(\log p)$. So,
        for all $i$, $\epsilon_i = \bigO(\sqrt{p \log p / k}) = \bigO(1)$.

\item $\HammingWeight{x} \leq k \leq p \log p$.
    Equation \eqref{eq:ballsintobins} ensures that $\HammingWeight{x \Sslash
        V_i} = \bigO(\log p)$. From the expression for the error, we see that
        $\epsilon_i = \bigO(\sqrt{\log p / \log p}) = \bigO(1)$.
\end{enumerate}

That is, there is a choice of constants that guarantees that $\epsilon_i < 1/2$
for all $i$ with bounded probability. In that case, we estimate each
$\HammingWeight{x \Sslash V_i}$ exactly, and so we exactly infer
$\HammingWeight{x}$ and consequently the value of $\Threshold_k(x)$.

If  $\Threshold_k(x) = 1$, the proof is slightly different. Again, we consider
three scenarios.

\begin{enumerate}
    \item $p \log p \leq k \leq \HammingWeight{x}$.
        Equation \eqref{eq:ballsintobins} tells us that $\HammingWeight{x
        \Sslash V_i} = \bigO(\HammingWeight{x} / p)$. Combining with
        \eqref{eq:xprecision} we see that there is a (controllable) constant
        $C$ for which
        \begin{equation}
            \epsilon_i < C \sqrt{\frac{\HammingWeight{x}}{k}}
        \label{eq:gettablebound}
        \end{equation}
        for all $\HammingWeight{x}, k$. Unlike before, we cannot guarantee that
        $\epsilon_i$ is kept below $1/2$ for all $\HammingWeight{x}$. But we
        can make sure that our estimate for the Hamming weight is always
        greater than $k$. Let \(X_j\) be the random variables corresponding to
        the estimations of each $\HammingWeight{x \Sslash V_j}$, and
        $\sigma_j^2$ the corresponding variances. From the Chebyshev bound,
        \begin{multline}
            \Probability\biggl[\bigl\vert{\sum_j X_j - \HammingWeight{x}}\bigr\vert > \HammingWeight{x} - k\biggr] <
            \abs{\frac{\sum_j \sigma_j}{\HammingWeight{x} - k}}^2 < {} \\
            {} < \abs{C' \, p\, \frac{\sqrt{\HammingWeight{x}/k}}{\HammingWeight{x}-k}}^2
        \end{multline}
        for some constant $C'$. Thus we can attain with constant probability an
        estimation of $\HammingWeight{x}$ with error within
        $\HammingWeight{x}-k$ if there exists a constant $C'$ such that there
        exists a value $\HammingWeight{x}^*$ satisfying:
        \begin{itemize}
            \item If $\HammingWeight{x}<\HammingWeight{x}^*$, the error in the
                estimation of each $\HammingWeight{x \Sslash V_j}$ is less than
                $1/2$, such that the estimate of $\HammingWeight{x}$ is exact,
            \item If $\HammingWeight{x}>\HammingWeight{x}^*$, then $C'
                \sqrt{\HammingWeight{x}/k} < (|x|-k)/p$, bounding the error
                probability to be constant. 
        \end{itemize}

        Choosing $\HammingWeight{x}^* = k + p \log p$, one can check that $C =
        1 / 4(C')$ satisfies the conditions above.
 
    \item $k \leq p \log p \leq \HammingWeight{x}$.
        Again, for all $i$, $\HammingWeight{x \Sslash V_i} =
        \bigO(\HammingWeight{x} / p)$. Combining this with the expression for
        the error \eqref{eq:xprecision}, we get $\epsilon_i =
        \bigO(\HammingWeight{x} / p \log p)$. The proof follows the same steps
        as the ``$p \log p \leq k \leq \HammingWeight{x}$'' case.

    \item $k \leq \HammingWeight{x} \leq p \log p$.
        From equation \eqref{eq:ballsintobins} we known that $\HammingWeight{x
        \Sslash V_i} = \bigO(\log p)$ for all $i$. Then, $\epsilon_i =
        \bigO(\sqrt{\log p / \log p}) = \bigO(1)$. So, in this case we can also
        ensure that we estimate $\sum_i \HammingWeight{x \Sslash V_i}$ exactly.
\end{enumerate}

\section{Total Non-Constant Symmetric Boolean Functions \label{sec:symmfunctions}}

We have shown, before, how to interpolate the \(k\)-threshold function based on
Quantum Singular Value Transformations. A similar interpolation scheme to the
one we have shown can actually be applied to the calculation of any symmetric
boolean function, as we now show. Furthermore, we show that a similar
difference exists between the scaling for this interpolation and the scaling
for a parallelization procedure.

We start by reviewing an intermediate claim of Beals \et al.~\cite{Beals2001}:

\begin{lemma}(Part of Theorem 4.10 of Beals \et al.\ \cite{Beals2001})
    \label{lemma:beals_threshold}
    For a symmetric boolean function \(f\), if given an algorithm that outputs
    \(\HammingWeight{X}\) if \(\HammingWeight{X} < (N - \Gamma(f))/2\) or
    outputs ``in'' otherwise, with \(Q\) queries to the oracle, immediately
    there is an algorithm that computes \(f\) with \(Q\) queries to the oracle.
\end{lemma}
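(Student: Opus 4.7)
The plan is to exploit the fact that \(f\) being totally symmetric means \(f(X) = g(\HammingWeight{X})\) for some \(g:\{0,1,\ldots,N\}\to\zo\), together with the meaning of \(\Gamma(f)\) as used in Beals \et al.: \(\Gamma(f)\) is the largest integer such that \(g\) is constant on the symmetric integer interval \(\{t : (N-\Gamma(f))/2 \le t \le (N+\Gamma(f))/2\}\) around \(N/2\) (equivalently, every transition \(g(t)\neq g(t+1)\) satisfies \(|2t-N+1|\ge \Gamma(f)\)). So the only hard case — namely when \(\HammingWeight{X}\) lies in this central ``plateau'' of \(g\) — is actually trivial: we simply output the unique value that \(g\) takes on the plateau, without recovering \(\HammingWeight{X}\) exactly.

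The algorithm then just runs the given subroutine twice. First I would run it on the input \(X\), using \(Q\) queries: if it returns an integer \(m\), then \(\HammingWeight{X}=m<(N-\Gamma(f))/2\) and we output \(g(m)\); otherwise the output is ``in'', and we only know that \(\HammingWeight{X}\ge (N-\Gamma(f))/2\). In the latter case, run the same subroutine but on the bitwise complement \(\neg X\); each query to the oracle for \(\neg X\) is realised by one query to \(O\) post-composed with a NOT on the answer qubit, so this second call costs at most \(Q\) further queries. If it returns an integer \(m'\), then \(\HammingWeight{\neg X}=m'<(N-\Gamma(f))/2\), which gives \(\HammingWeight{X}=N-m'>(N+\Gamma(f))/2\), and we output \(g(N-m')\). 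If it also returns ``in'', then \((N-\Gamma(f))/2 \le \HammingWeight{X}\le(N+\Gamma(f))/2\), and by the defining property of \(\Gamma(f)\) we output the common constant value of \(g\) on this interval.

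Correctness is immediate once the two branches are combined: the three output cases cover \(\{0,1,\ldots,N\}\) exactly, and in each case we know \(g(\HammingWeight{X})\). The bounded-error guarantee is preserved by a standard union bound over at most two independent calls, at worst with one round of probability amplification to drive each failure probability below a sufficiently small constant. The total query count is at most \(2Q\), which is the sense in which the statement ``\(Q\) queries'' should be read — the constant factor is absorbed, as is conventional in this regime and in the original Beals \et al.\ formulation. There is no real technical obstacle here; the main content of the argument is conceptual, namely recognising that the given one-sided estimator, together with the symmetry \(\HammingWeight{\neg X}=N-\HammingWeight{X}\), yields a two-sided estimator whose residual uncertainty is exactly the plateau of \(g\) on which \(f\) is already determined by the definition of \(\Gamma(f)\).
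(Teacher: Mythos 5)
Your proof is correct, but it takes a genuinely different (and in one respect more careful) route than the paper's. The paper's proof runs the hypothesised algorithm \Algorithm{A} exactly once and merely translates its output: an exact weight $m$ is mapped to $f(m)$, and ``in'' is mapped to the constant value that $f$ takes on the central interval $[(N-\Gamma(f))/2,\,(N+\Gamma(f))/2]$. This uses exactly $Q$ queries, but it implicitly reads ``in'' as certifying that $\HammingWeight{X}$ lies in that central interval --- which is how the counting subroutine of Beals \emph{et al.} actually behaves, since it also returns the exact weight when $\HammingWeight{X} > (N+\Gamma(f))/2$, and which is also how the appendix later instantiates \Algorithm{A} (it tests both thresholds $(N\pm\Gamma(f))/2$ before declaring ``in''). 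You instead take the hypothesis literally --- ``in'' only certifies $\HammingWeight{X} \ge (N-\Gamma(f))/2$ --- and resolve the residual ambiguity with a second call on the complement $\neg X$, using $\HammingWeight{\neg X} = N - \HammingWeight{X}$ and simulating the complemented oracle at one query of $O$ per query. That makes your argument valid under the weaker promise (where the paper's single-run translation would misclassify inputs with $\HammingWeight{X} > (N+\Gamma(f))/2$ on which $f$ differs from its plateau value), at the price of up to $2Q$ queries plus a constant-factor amplification to keep bounded error (your union bound over two calls each erring with probability $1/3$ is not by itself bounded error, as you note). The paper's version attains the stated $Q$ exactly but leans on the stronger reading of ``in''; since the lemma is only ever used inside big-O statements, either version serves the paper's purposes.
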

\begin{proof}
    Let \Algorithm{A} be an algorithm as outlined in the lemma, requiring \(Q\)
    queries to the oracle. By definition of \(\Gamma(f)\), \(f\) is constant for
    \(X\) such that \(\HammingWeight{X} \in [(N-\Gamma(f))/2, (N+\Gamma(f))/2]\).
    Therefore, let \Algorithm{A'} be an algorithm that runs \Algorithm{A}, and
    then:
    \begin{itemize}
        \item If \Algorithm{A} outputs ``in'', \Algorithm{A'} outputs
            \(f((N-\Gamma(f))/2)\),
        \item If \Algorithm{A} outputs \(\HammingWeight{X}\), \Algorithm{A'}
            outputs \(f(\HammingWeight{X})\).
    \end{itemize}
    
    \Algorithm{A'} requires only as many queries as \Algorithm{A}.
\end{proof}

Now, departing from Beals \et al.'s proof, we rephrase the construction of an
algorithm matching the description of \cref{lemma:beals_threshold} in terms of
Quantum Singular Value Transformations.

We start with the following lemma of Low and Chuang \cite{Low2017}:

\begin{lemma}{\cite{Low2017}}
    \label{lemma:qsvt_erf}
    For a given \(k\in\Reals\), \(\delta \in [-1, 1]\) and \(\epsilon\in(0,
    \bigO(1))\), there exists a real polynomial \(p(x)\) satisfying
    \begin{gather*}
        \abs{p(x)} \leq 1 \Where x \in [-1, 1], \text{ and} \\
        \abs{p(x) - \erf(k (x - \delta))} \leq \epsilon \Where x \in [-1, 1].
    \end{gather*}
    with polynomial degree
    \begin{equation}\label{eq:qsvt_erf}
        \deg(p) = \bigO*(\sqrt{\qty(\log\frac{1}{\epsilon}) \qty(k^2 + \log\frac{1}{\epsilon})}).
    \end{equation}
\end{lemma}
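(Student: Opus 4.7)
The plan is to reduce to approximating a Gaussian on \([-2,2]\) and then integrating, using a Fourier-integral plus Jacobi--Anger argument to obtain the stated degree in both regimes (\(k^2 \gtrsim \log(1/\epsilon)\) and \(k^2 \lesssim \log(1/\epsilon)\)) in a unified way. Writing
\[
    \erf(k(x-\delta)) = \frac{2k}{\sqrt{\pi}} \int_0^{x-\delta} e^{-k^2 t^2}\, dt,
\]
any polynomial \(\tilde g\) with \(\sup_{t\in[-2,2]} |\tilde g(t) - e^{-k^2 t^2}| \leq \tilde\epsilon\) yields, after term-by-term integration, a polynomial \(\tilde P\) of degree \(\deg(\tilde g)+1\) approximating \(\erf(k(x-\delta))\) uniformly on \([-1,1]\) to precision \(\bigO(k\,\tilde\epsilon)\). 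Choosing \(\tilde\epsilon = \Theta(\epsilon/k)\) then gives precision \(\epsilon\), with the extra \(\log k\) absorbed into the log-factor in the final degree.

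The core step is approximating the Gaussian sharply. I would use the Fourier representation
\[
    e^{-k^2 t^2} = \frac{1}{2\sqrt{\pi}\,k}\int_{-\infty}^{\infty} e^{-\xi^2/(4k^2)}\, e^{i\xi t}\, d\xi,
\]
truncate the integral to \(|\xi| \leq \Xi := 2k\sqrt{\log(1/\tilde\epsilon)}\) (with tail error of order \(\tilde\epsilon\) by a standard Gaussian-tail estimate), and expand each retained plane wave via Jacobi--Anger, \(e^{i\xi t} = J_0(\xi) + 2\sum_{n \geq 1} i^n J_n(\xi)\, T_n(t)\), truncated at degree \(N\). The Bessel-function bound \(|J_n(\xi)| \leq (|\xi|/2)^n/n!\), combined with Stirling, gives a truncation error \(\bigO((e|\xi|/(2N))^N)\), which falls below \(\tilde\epsilon\) uniformly for \(|\xi| \leq \Xi\) provided
\[
    N = \bigO\!\left( \Xi + \log(1/\tilde\epsilon)\right) = \bigO\!\left(k\sqrt{\log(1/\tilde\epsilon)} + \log(1/\tilde\epsilon)\right).
\]
A direct algebraic check shows this quantity is \(\Theta(\sqrt{\log(1/\tilde\epsilon)(k^2 + \log(1/\tilde\epsilon))})\), matching the claimed degree. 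Finally, since \(|\erf| \leq 1\) and the approximation error is at most \(\epsilon\), rescaling \(\tilde P\) by \(1/(1+\epsilon)\) enforces \(|p(x)| \leq 1\) on \([-1,1]\) at the cost of an additive \(\bigO(\epsilon)\) in the approximation error, which is absorbed by adjusting constants in \(\tilde\epsilon\).

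The main obstacle is the tail analysis of the Jacobi--Anger expansion, which must be pushed through uniformly in \(\xi\) across \([-\Xi,\Xi]\): the bound \((e|\xi|/(2N))^N\) is vacuous for \(N \lesssim |\xi|\), so one has to verify that the regime \(N > e\Xi/2\) is exactly what the stated degree provides, and that the two truncations (in \(\xi\) and in \(n\)) can be set to a common threshold \(\tilde\epsilon\) without mutual interference. A more direct Chebyshev expansion of the Gaussian via modified Bessel-function coefficients, \(e^{-k^2 t^2} = e^{-k^2/2} \sum_n (-1)^n I_n(k^2/2)\, T_n(2t^2-1)\), would give a weaker bound \(N = \bigO(k^2)\) in the large-\(k\) regime (the variable \(2t^2 - 1\) changes the effective scale), which is why the Fourier-integral route is the right one.
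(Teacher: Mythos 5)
The paper itself does not prove this lemma (it is imported verbatim from Low and Chuang), and your Fourier-integral-plus-Jacobi--Anger construction is essentially the route used in that literature, with the correct shape of the degree count $N=\bigO(\Xi+\log(1/\tilde\epsilon))=\Theta\bigl(\sqrt{\log(1/\tilde\epsilon)\,(k^2+\log(1/\tilde\epsilon))}\bigr)$ for the Gaussian. The genuine gap is in the last parameter step: because you integrate the Gaussian approximant against the prefactor $2k/\sqrt{\pi}$, you must take $\tilde\epsilon=\Theta(\epsilon/k)$, and your final degree is $\bigO\bigl(\sqrt{\log(k/\epsilon)\,(k^2+\log(k/\epsilon))}\bigr)$, not the stated $\bigO\bigl(\sqrt{\log(1/\epsilon)\,(k^2+\log(1/\epsilon))}\bigr)$. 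The assertion that ``the extra $\log k$ is absorbed into the log-factor'' fails exactly when $\log k\gg\log(1/\epsilon)$: for constant $\epsilon$ and large $k$ your bound is $\bigO(k\sqrt{\log k})$ while the lemma promises $\bigO(k)$. This is not cosmetic here, since Lemma~\ref{lemma:cut_in_half} invokes the present lemma with constant $\eta$ and $k=\Theta(1/(b-a))$, so the slack would turn $D_\text{round}=\bigO(1/(b-a))$ into $\bigO\bigl((1/(b-a))\sqrt{\log(1/(b-a))}\bigr)$ and propagate into the appendix complexities. To get the stated bound you need an argument that never pays the factor $k$ in sup norm, e.g.\ bound the Chebyshev coefficients of $\erf(k(x-\delta))$ directly via a Bernstein-ellipse estimate ($|\erf(kz)|\leq(1+kb)e^{k^2b^2}$ on the ellipse of semi-minor axis $b$, truncation error $\lesssim (1+kb)e^{k^2b^2-Nb}$, optimized at $b=\Theta(N/k^2)$), which gives $N=\bigO(k\sqrt{\log(1/\epsilon)}+\log(1/\epsilon))$ with no $\log k$.

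Two smaller points. Your reason for rejecting the direct Chebyshev expansion of the Gaussian rests on the crude bound $I_n(z)\leq (z/2)^n e^z/n!$; with the correct tail behaviour $e^{-z}I_n(z)\lesssim e^{-\Omega(n^2/z)}$ for $n\lesssim z$ (or the same ellipse argument applied to $e^{-k^2t^2}$), that route also yields degree $\bigO(k\sqrt{\log(1/\tilde\epsilon)}+\log(1/\tilde\epsilon))$, so it is not weaker than the Fourier route --- the side remark is incorrect, though harmless to your main argument. Also, Jacobi--Anger expands $e^{i\xi t}$ in $T_n(t)$ for $t\in[-1,1]$, whereas you need $t\in[-2,2]$ (since $x-\delta$ ranges there); rescaling $t\mapsto t/2$ doubles the effective $\xi$ and only changes constants, but it should be said. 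The interaction you worried about between the two truncations is fine: the retained Fourier weight integrates to at most $1$, so the two errors simply add.
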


From this lemma follows the already mentioned construction for a polynomial
approximation to the threshold function, which we restate:

\begin{corollary}{\cite{Low2017}}
    \label{corollary:qsvt_threshold}
    For a given \(\delta \in [-1, 1]\), \(\epsilon \in (0, \bigO(1))\), \(\eta
    \in (0, 1/4)\), there exists a real polynomial \(p\) satisfying
    \begin{align*}
        \abs{p(x)} \leq 1 & \Where x \in [-1, 1] \\
        \abs{p(x) - 1} \leq \eta & \Where x \in [-1, \delta - \epsilon], \\
        \abs{p(x)} \leq \eta & \Where x \in [\delta + \epsilon, 1],
    \end{align*}
    and with polynomial degree
    \[ \deg(p) = \bigO*(\frac{1}{\epsilon} \log \frac{1}{\eta}). \]
\end{corollary}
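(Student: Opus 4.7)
The plan is to build $p$ by suitably shifting and rescaling the polynomial approximation of $\erf$ provided by Lemma~\ref{lemma:qsvt_erf}. The starting observation is the Gaussian tail estimate $|1 - \erf(y)| = \bigO(e^{-y^2}/y)$ for $y \geq 1$, which implies that for $|y| \geq c\sqrt{\log(1/\eta)}$ (with a suitable absolute constant $c$) one has $|\erf(y) - \mathrm{sgn}(y)| \leq \eta/2$. I would therefore choose the scale $k := c\sqrt{\log(1/\eta)}/\epsilon$, so that whenever $|x - \delta| \geq \epsilon$ the rescaled argument satisfies $|k(x - \delta)| \geq c\sqrt{\log(1/\eta)}$ and thus $\erf(k(x - \delta))$ is $(\eta/2)$-close to $\mathrm{sgn}(x-\delta)$ on the two ``plateau'' intervals $[-1, \delta-\epsilon]$ and $[\delta+\epsilon, 1]$.

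Next, I would invoke Lemma~\ref{lemma:qsvt_erf} with this $k$ and target accuracy $\eta/2$ to obtain a real polynomial $q$ with $|q(x)| \leq 1$ on $[-1,1]$ and $|q(x) - \erf(k(x-\delta))| \leq \eta/2$ there, and then define $p(x) := (1 - q(x))/2$. The bound $|p(x)| \leq 1$ on $[-1,1]$ is immediate. On $[-1, \delta-\epsilon]$ the two $\eta/2$ estimates combine by the triangle inequality to give $|q(x) - (-1)| \leq \eta$, hence $|p(x) - 1| \leq \eta/2 \leq \eta$; the symmetric argument on $[\delta + \epsilon, 1]$ yields $|q(x) - 1| \leq \eta$ and thus $|p(x)| \leq \eta/2 \leq \eta$. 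For the degree, substituting $k = \Theta(\sqrt{\log(1/\eta)}/\epsilon)$ and accuracy $\eta/2$ into \eqref{eq:qsvt_erf} gives $\bigO(\sqrt{\log(1/\eta)\,(k^2 + \log(1/\eta))}) = \bigO((1/\epsilon)\log(1/\eta))$, since the $k^2 = \Theta(\log(1/\eta)/\epsilon^2)$ term dominates the additive $\log(1/\eta)$ inside the square root whenever $\epsilon \lesssim 1$.

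There is no deep conceptual obstacle here; essentially all the work is already done by Lemma~\ref{lemma:qsvt_erf}, and the construction is bookkeeping. The one point demanding care is the dual role of the scale parameter $k$: it must be large enough that the \emph{continuous} function $\erf(k(\,\cdot\,-\delta))$ is already within $\eta/2$ of $\mathrm{sgn}(\,\cdot\,-\delta)$ on the plateau regions, yet its growth with $1/\epsilon$ must remain compatible with the bound in \eqref{eq:qsvt_erf} collapsing to the advertised $\bigO((1/\epsilon)\log(1/\eta))$. The choice $k = \Theta(\sqrt{\log(1/\eta)}/\epsilon)$ meets both requirements, and the proof then follows by direct calculation.
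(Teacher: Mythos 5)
Your proposal is correct and follows essentially the same route the paper relies on: the corollary is stated as a direct consequence of Lemma~\ref{lemma:qsvt_erf} (citing Low and Chuang), and the standard derivation is exactly your choice $k = \Theta(\sqrt{\log(1/\eta)}/\epsilon)$ with accuracy $\eta/2$, followed by the affine remap $p = (1-q)/2$, which also correctly matches the corollary's orientation ($p\approx 1$ left of $\delta$, $p\approx 0$ right of it) and collapses the degree bound of \eqref{eq:qsvt_erf} to $\bigO((1/\epsilon)\log(1/\eta))$.
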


To make use of these polynomial transformations, we also recall the block
encoding of the quantities of interest, which are the same as for the
\(k\)-threshold case; for unitary
\[ U = \quad\raisebox{1.2em}{%
    \Qcircuit @C=1.1em @R=.5em {
        & \qw{{}^{n}/} & \gate{H^{\otimes}}     & \multigate{1}{O_X} & \qw \\
        & \qw          & \qw                    & \ghost{O_X}        & \gate{X} & \qw \\
    }
}
\]
and \(\Pi = \dyad*{0^{n+1}}\), we have that \((\Identity_{2^n} \otimes
\dyad{1}) U \Pi\) is a block encoding of \(\sqrt{\HammingWeight{X}/N}\), and
\((\Identity_{2^n} \otimes \dyad{0}) U \Pi\) is a block encoding of
\(\sqrt{(N-\HammingWeight{X})/N}\).

Now we first determine if the Hamming weight of the input should produce output
``in'' or not, which is, essentially, the task of calculating the
\(k\)-threshold function with \(k = (N - \Gamma(f))/2\) and with \(k' = (N +
\Gamma(f))/2\). As stated in the body text, the case of threshold \(k'\) can
be reduced to the case of threshold \(k\) calculated for the complement of the
Hamming weight \(N - \HammingWeight{X}\), and so we conclude that this step
requires \(\bigO(2\sqrt{N(N-\Gamma(f))})=\bigO(\sqrt{N(N-\Gamma(f))})\)
applications of the oracle.

In the event that we find \(\HammingWeight{X}\) to be smaller than
\((N-\Gamma(f))/2\), or larger than \((N+\Gamma(f))/2\), it remains to output
the Hamming weight of \(X\), or of \(\neg{X}\), respectively. We consider
henceforth the case of \(\HammingWeight{X} < (N-\Gamma(f))/2\), from which 
generalization is easy.

Note first that performing bisections on \(\HammingWeight{X}\) for
\(\HammingWeight{X} \in [0, (N-\Gamma(f))/2]\) corresponds to performing
successive threshold operations for thresholds \(k' < (N-\Gamma(f))/2\), so, by
binary search, we have that we may find \(\HammingWeight{X}\) with
\(\bigO[\sqrt{N(N-\Gamma(f))} \log^2(N-\Gamma(f))]\) applications of the
oracle, where one of the \(\log\) factors is due to the binary search, and the
other to error probability bounding. However, by making direct use of
\cref{lemma:qsvt_erf}, the \(\log\) factors can be significantly lowered.
Consider the following lemma:

\begin{lemma} \label{lemma:cut_in_half}
    Given the block encoding of a value \(z \in [a,b] \subseteq [-1, 1]\), it
    is possible to determine \([a', b'] \subseteq [a,b]\) such that \(z \in
    [a', b']\), and \((b' - a') \leq (b - a)/2\), with
    \begin{equation}
        D_\text{round} = \bigO*(\frac{1}{b - a})
    \end{equation}
    coherent applications of the oracle, and
    \begin{equation}
        T_\text{round} = \bigO*(\frac{1}{b-a} \log \frac{1}{E})
    \end{equation}
    total applications of the oracle, with probability of error at most \(E\).
\end{lemma}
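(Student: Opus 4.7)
The strategy is to reduce a single bisection step to two ``which side of the midpoint?'' tests, each realized by applying a QSVT polynomial step function to the given block encoding of $z$. The subtlety is that the step polynomial of \cref{corollary:qsvt_threshold} only separates values cleanly outside a transition region of width $\epsilon$, so I would place two thresholds symmetrically about $(a+b)/2$ and cover $[a,b]$ with three overlapping candidate subintervals of length exactly $(b-a)/2$, chosen so that every possible $z$ lies in whichever subinterval the pair of votes selects.

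Concretely, set $\eta = 1/4$, $\epsilon = (b-a)/16$, $t_A = (a+b)/2 - (b-a)/8$, and $t_B = (a+b)/2 + (b-a)/8$. By \cref{corollary:qsvt_threshold}, there are polynomials $p_A, p_B$ of degree $\bigO((1/\epsilon)\log(1/\eta)) = \bigO(1/(b-a))$ that approximate step functions at $t_A, t_B$ to accuracy $\eta$ outside transitions of width $\epsilon$. Applying QSVT (\cref{thm:qsvt}) to the block encoding of $z$ produces unitaries that block-encode $p_A(z)$ and $p_B(z)$ using $\bigO(1/(b-a))$ queries to $O$. A measurement of the block-encoding register then yields a Bernoulli sample whose bias $|p_A(z)|^2$ (resp.\ $|p_B(z)|^2$) exceeds $(1-\eta)^2 = 9/16$ when $z$ is on the ``left'' of $t_A$ (resp.\ $t_B$) and lies below $\eta^2 = 1/16$ when $z$ is on the ``right'', provided $z$ is outside the corresponding transition region. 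Repeating each test $\bigO(\log(1/E))$ times and taking a majority vote drives the failure probability of each vote below $E/2$, and hence (by a union bound) the total failure probability below $E$.

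Based on the pair of majority outcomes $(V_A, V_B) \in \{L,R\}^2$, the algorithm outputs
\begin{align*}
    I_1 &= \left[a, \tfrac{a+b}{2}\right] && \text{if } (V_A, V_B) = (L, L), \\
    I_2 &= \left[\tfrac{3a+b}{4}, \tfrac{a+3b}{4}\right] && \text{if } (V_A, V_B) = (R, L), \\
    I_3 &= \left[\tfrac{a+b}{2}, b\right] && \text{if } (V_A, V_B) = (R, R);
\end{align*}
the unphysical outcome $(L, R)$ occurs only when a majority vote fails, and is arbitrarily assigned to $I_2$. Each $I_j$ has length exactly $(b-a)/2$. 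Correctness follows from a short case analysis: conditioned on both majority votes succeeding, $V_A = L$ forces $z \leq t_A + \epsilon$ and $V_A = R$ forces $z \geq t_A - \epsilon$ (analogously for $V_B$), and one checks by substitution that each of the three outcomes places $z$ inside the corresponding $I_j$.

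The main technical obstacle is jointly tuning the threshold offset $\delta := (b-a)/8$ and the transition width $\epsilon$ to simultaneously satisfy (i) $\delta > \epsilon$, ensuring the sure-decision regions of the two tests are mutually consistent and that the $(L,R)$ outcome never occurs honestly, and (ii) $\delta + \epsilon \leq (b-a)/4$, ensuring that the transition regions around $t_A$ and $t_B$ both fit inside $I_2$. The choice $\delta = (b-a)/8$, $\epsilon = (b-a)/16$ meets both. Summing over the two tests, the longest circuit has depth $D_\text{round} = \bigO(1/(b-a))$ and the total query count is $T_\text{round} = \bigO((1/(b-a))\log(1/E))$, as claimed.
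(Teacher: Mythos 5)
Your proposal is correct, but it proves the lemma by a genuinely different route than the paper. The paper's proof is estimation-based: it block-encodes a single smooth polynomial approximating $\erf(k(z-\mu))$ (via \cref{lemma:qsvt_erf}), estimates the resulting Bernoulli bias to constant precision with $\bigO(\log(1/E))$ samples, and then inverts the error function (using bounds on the derivative of its inverse) to turn that estimate into a new window of at most half the original width. You instead run two threshold tests from \cref{corollary:qsvt_threshold} at points offset by $\pm(b-a)/8$ from the midpoint and decide among three overlapping half-length subintervals; your parameter choice $\delta=(b-a)/8$, $\epsilon=(b-a)/16$ correctly makes the sure-decision regions consistent (so the $(L,R)$ outcome only arises on a failed vote) and keeps both transition regions inside the middle interval, and your Chernoff-plus-union-bound accounting gives the claimed $D_\text{round}$ and $T_\text{round}$. (One cosmetic fix: \cref{corollary:qsvt_threshold} requires $\eta\in(0,1/4)$, so take e.g.\ $\eta=1/8$ rather than $\eta=1/4$; nothing else changes.) The trade-off between the two arguments is worth noting: your comparison-based round is more elementary — no inverse-error-function analysis, explicit constants, a clean halving guarantee — but with a fixed-degree step polynomial extra sampling cannot localize $z$ below the transition width, whereas the paper's erf-estimation round lets sampling precision be traded continuously against window size at fixed polynomial degree. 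That continuous trade-off is exactly what the subsequent interpolation argument (the ``switch to statistical sampling'' second phase) exploits, so while your proof fully establishes \cref{lemma:cut_in_half} as stated, the paper's construction is the one that feeds directly into the $\alpha$-interpolation bounds that follow.
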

\begin{proof}
    Fix \(\eta \in O(1)\), for example, \(\eta = 1/8\). Using QSVT, create a
    block encoding of \(P(z)\), where \(P\) is the polynomial approximating
    \(\erf(k(x-\delta))\) up to absolute error \(\epsilon\) (to be determined),
    with \(k = \frac{2}{b-a} \erf\inv(1-2\eta)\) and \(\mu = (b-a)/2\). For a
    choice of \(\sigma\), after \(\bigO(\log(1/E)/\sigma^2)\) samples of this
    encoding, one obtains an estimate for \(\erf(k(z-\mu))^2\) up to precision
    \(\sigma + \epsilon^2\) with error probability \(E\); denote this estimate
    \(\tilde p\). This estimate implies a new window \([a', b']\) for \(z\)
    satisfying
    \begin{multline}
        b' - a' \leq \frac{1}{k}
            \big(\erf\inv(2 \sqrt{\tilde p} - 1 + 2(\sigma + \epsilon)) - {} \\
                \erf\inv(2\sqrt{\tilde p} - 1 - 2(\sigma + \epsilon)) \big)
    \end{multline}
    which in turn satisfies, with our choice of \(k\),
    \begin{multline}
        \frac{b'-a'}{b-a} \leq \frac{2}{\erf\inv(1-2\eta)} (\sigma+\epsilon) \times {} \\
        {} \times \max_{y\in[-2,2]} (\erf\inv)' (2\sqrt{\tilde{p}}-1 + y(\sigma+\epsilon)).
    \end{multline}
    
    Demanding that
    \begin{equation} \label{eq:constraint}
        \sigma + \epsilon \Please\leq \eta/4,
    \end{equation}
    we have
    \begin{equation}
        \frac{b'-a'}{b-a} \leq \frac{\eta}{4} \frac{\sqrt{\pi}}{\erf\inv(1-2\eta)} e^{(\erf\inv(1-\eta))^2}
    \end{equation}
    which, for \(\eta = 1/8\), has a right-hand side of less than one half.
    
    Since \(\eta \in \bigO(1)\), we may choose \(\sigma \in \bigO(1)\) and
    \(\epsilon \in \bigO(1)\) satisfying the constraint \eqref{eq:constraint},
    and thus follows that this procedure requires
    \begin{equation}
        M_\text{round} = \bigO*(\frac{1}{\sigma^2} \log\frac{1}{E}) = \bigO*(\log\frac{1}{E})
    \end{equation}
    measurements of a circuit encoding a polynomial transformation of degree
    (cf.\ equation \eqref{eq:qsvt_erf})
    \begin{equation}
        D_\text{round} = \bigO*(k) = \bigO*(\frac{1}{b-a})
    \end{equation}
    or, equivalently, the same number of coherent queries. The total number of
    queries is therefore
    \begin{equation}
        T_\text{round} = D_\text{round} \cdot M_\text{round} = \bigO*(\frac{\log E\inv}{b-a})
    \end{equation}
    as claimed.
\end{proof}

By repeating the procedure given in the lemma above, we may reduce the window
for \(\sqrt{\HammingWeight{X}/N}\) until this value is unambiguous. This
requires a final window of size \(\Delta =
\frac{1}{2}(\sqrt{N-\Gamma(f)}-\sqrt{N-\Gamma(f)-1})\), which in turn requires
\(\log\/(\sqrt{N-\Gamma(f)}/\Delta) = \bigO[\log\/\{\sqrt{N}(N - \Gamma(f))\}]\) rounds of
application of the lemma. 

Since we wish any of these rounds to fail with probability at most \(1/3\),
this requires that each round fails with probability at most
\(1/(3\log[\sqrt{N}(N-\Gamma(f))])\).

Therefore, overall, this procedure requires
\begin{equation}
    D = \bigO*(\sqrt{N(N - \Gamma(f))})
\end{equation}
maximum coherent oracle calls, and a total number of oracle calls
\begin{equation}
    T = \bigO*(\sqrt{N - \Gamma(f)} \log\log[\sqrt{N}(N-\Gamma(f))]).
\end{equation}

Performing the interpolation now is straightforward: instead of demanding the
procedure from lemma \ref{lemma:cut_in_half} be repeated until the window is so
small that the Hamming weight of \(X\) is unambiguous, we instead choose a final
window size \(\Delta'\) that respects the given coherent query limit. After this
limit has been reached, we ``switch'' to statistical sampling until the
final window size for the value of \(\sqrt{\HammingWeight{X}/N}\) is
\(\Delta\). This procedure therefore is split into two steps; following an
analysis analogous to the one for the unbound case, we conclude that for some
choice of \(\Delta'\), the first phase requires maximum coherent query depth
\begin{equation}
    D_\text{first} = \bigO*(\frac{1}{\Delta'})
\end{equation}
and total query count
\begin{equation}
    T_\text{first} = \bigO*(\frac{\log E\inv}{\Delta'}).
\end{equation}

Using the fact that \((\erf\inv)'(x) \geq \sqrt{\pi}/2\), one may then conclude
that and the second phase requires corresponding
\begin{gather}
    D_\text{second} = \bigO*(\frac{1}{\Delta'} \sqrt{\log(C\frac{\Delta'}{\Delta})}) \\
    T_\text{second} = \bigO*(\frac{\Delta'}{\Delta^2} \sqrt{\log(C\frac{\Delta'}{\Delta})}\log E\inv)
\end{gather}
where, again, \(E\) is the error probability, and \(C\) is a constant in
\(\bigO(1)\).

Choosing this intermediate window size \(\Delta'\) to be \(\Delta^{1-\alpha}\),
for \(\alpha \in [0,1]\), we recover complexities analogous to those verified
for \(\alpha\)-Quantum Phase Estimation \cite{GiurgicaTiron2022,Magano2022}:
\begin{multline}
    D(\alpha) = D_\text{first} + D_\text{second} = {} \\
    {} = \bigO*(\Delta^{\alpha-1} \sqrt{\log(C\Delta^{-\alpha})})
\end{multline}
\begin{multline}
        T(\alpha) = T_\text{first} + T_\text{second} = {}\\
        {}=\bigO*(\Delta^{-(1+\alpha)} \sqrt{\log(C\Delta^{-\alpha})} \log E\inv)
\end{multline}

Using again the fact that \(\Delta\inv = \bigO(\sqrt{N(N-\Gamma(f))})\), and
with considerations as to the error probability identical to before, we finally
have
\begin{gather}
    D(\alpha) = \tbigO*({[ N(N-\Gamma(f)) ]}^{(1-\alpha)/2}) \\
    T(\alpha) = \tbigO*({[N(N-\Gamma(f))]}^{(1+\alpha)/2}).
\end{gather}

\end{document}